\documentclass[a4paper,10pt, reqno]{amsart}
\usepackage[utf8]{inputenc}

\usepackage{amsmath,amssymb,amsfonts}
\usepackage{mathptmx}
\usepackage{varioref}
\usepackage{a4wide}
\usepackage{hyphenat}
\usepackage{multirow}
\usepackage{mathrsfs}
\usepackage{txfonts}
\usepackage[numbers,sort&compress]{natbib}

\usepackage{enumerate}
\usepackage{graphicx}

\usepackage[pdfpagelabels,hypertexnames=true,
            plainpages=false,
            naturalnames=false]{hyperref}
\usepackage{color}
\definecolor{darkblue}{rgb}{0,0.1,0.5}
\hypersetup{colorlinks,
            linkcolor=darkblue,
            anchorcolor=darkblue,
            citecolor=darkblue}

\numberwithin{equation}{section}

\newcommand{\bx}{\boldsymbol{x}}
\newcommand{\by}{\boldsymbol{y}}
\newcommand{\bxi}{\boldsymbol{\xi}}

\newcommand{\bi}{\boldsymbol{i}}
\newcommand{\fitMC}{\boldsymbol{f}}

\newcommand{\symgr}{\mathcal{S}}

\newcommand{\dd}{\mathrm{d}}
\newcommand{\N}{\mathbb{N}}
\newcommand{\Q}{\mathbb{Q}}
\newcommand{\Pb}{\mathbb{P}}

\newcommand{\argmin}{\operatorname{argmin}}
\newcommand{\1}{\boldsymbol{1}}
\newcommand{\ii}{\mathrm{i}}

\renewcommand{\leq}{\leqslant}
\renewcommand{\geq}{\geqslant}

\usepackage[capitalise]{cleveref}

\crefname{table}{Table}{Tables}
\creflabelformat{enumi}{#2#1#3)}

\newtheorem{theorem}{Theorem}
\newtheorem{lemma}{Lemma}

\newtheorem{proposition}{Proposition}

\theoremstyle{remark}
\newtheorem{remark}{Remark}
\newtheorem*{conjecture}{Conjecture}

\title{Asymptotic fitness distribution in the Bak--Sneppen model of biological evolution with four species}
\author{Eckhard Schlemm}

\begin{document}

\address{Wolfson College, University of Cambridge}
\email{es555@cam.ac.uk}

\subjclass{Primary: 92D15; secondary: 60J05}
\keywords{Bak--Sneppen model, biological evolution, linear recurrence equation, Markov chain, stationary distribution}

\begin{abstract}
We suggest a new method to compute the asymptotic fitness distribution in the Bak--Sneppen model of biological evolution. As applications we derive the full asymptotic distribution in the four\hyp{}species model, and give an explicit linear recurrence relation for a set of coefficients determining the asymptotic distribution in the five\hyp{}species model.
\end{abstract}
% 92D15   	Problems related to evolution
% 60J05   	Markov processes with discrete parameter

\maketitle

\section{Introduction and main result}
The Bak--Sneppen (BS) model of biological evolution has been introduced in \cite{Bak1993punctuated} to study self-organised critical behaviour in populations with natural selection and spatial interactions. In the model, a population is spread out over a circle so that each species has exactly two neighbours, and each site is assigned a numerical parameter between zero and one that describes the {\it fitness} of the species at that particular site. The system evolves according to the following discrete\hyp{}time dynamics: at each time step the species with the smallest fitness parameters becomes extinct and is replaced by a new species whose fitness parameter becomes independent uniformly distributed on the interval $[0,1]$. To take into account interactions between species, e.\,g.\ competition for resources or predator\hyp{}prey relations, the fitness parameters of the two neighbours of the least fit species are also reset to random values.
% be it because they become extinct and are superseded by a new species, or be it because they undergo mutations in response to their neighbour becoming extinct. 
This represents the idea that the fitness is not just a fixed property of a species but also depends on its environment. Even though this is a very rudimentary model for evolution it does possess some of the characteristic features thought to be associated with evolving biological systems, including long\hyp{}range dependence and evolutionary activity on all time scales, see e.\,g.\ \citep{jensen1998self} for a discussion in the context of self\hyp{}organised criticality.

Despite its apparently easy definition, the BS model has withstood most attempts at mathematical analysis in the past. Partial results have been obtained, however, in the context of rank\hyp{}driven processes and mean\hyp{}field approximations \citep{flyvbjerg1993mean,grinfeld2011bak,grinfeld2011rank}. Early on it was conjectured based on simulations that the fitness distribution at a fixed site converges, in the limit of large populations, to a uniform distribution on the interval $[f_c,1]$, where $f_c\approx 0.667$.

There are only few rigorous results about the original BS model; in \citep{MeesterZnamenski2003limit} it is proved that the expected fitness at a fixed site in the stationary regime is bounded away from unity uniformly in the size of the population; \citep{MeesterZnamenski2004critical} gives a conditional characterization of the limiting distribution in terms of a set of critical thresholds \citep[see also][]{gillet2006bounds,gillet2006maximal}.

In this paper we suggest a novel approach to determining the stationary fitness distribution in the BS model with a finite number of species. The method consists of translating the evolutionary dynamics into a set of linear recurrence equations for the coefficients of the densities of the finite\hyp{}time distributions and analysing the asymptotic properties of these recursive equations. We apply the method to derive the asymptotic fitness distribution in the BS model with four species, which is given by (\cref{theorem-asympdist4})
\begin{equation*}
\mu(\dd^4\bx)=\frac{3}{2}\left[\sum_{\nu=1}^4{\frac{x_\nu(3-x_\nu(3-x_\nu))}{(3-x_\nu(3-x_\nu(3-x_\nu)))^2}}\right]\1_{[0,1]^4}(\bx)\dd^4\bx.
\end{equation*}

It has been suggested that a similar, and potentially easier, analysis can be applied to the anisotropic Bak--Sneppen (aBS) model; in this variation of the model the evolutionary dynamics are modified so that the fitness parameter of only one neighbour, say the right one, is updated at each time step together with the fitness parameter of the least fit species. While many qualitative properties are believed to be shared by the original and the anisotropic BS, analytical considerations and numerical calculations indicate that they fall in different universality classes, and have thus different critical exponents \citep{head1998anisotropic, maslov1998critical}. Our method can be applied to the aBS. In the first non\hyp{}trivial case of three species the asymptotic fitness distribution follows from computations which are very similar to those leading to \cref{theorem-asympdist4}, and is given by
\begin{equation*}
\mu(\dd^3\bx)=\frac{2}{3}\left[\sum_{\nu=1}^3{\frac{x_\nu(2-x_\nu)}{(2-x_\nu(2-x_\nu))^2}}\right]\1_{[0,1]^3}(\bx)\dd^3\bx.
\end{equation*}
The complexity of the calculations for larger populations seems to be increasing at a similar rate as in BS and studying aBS does therefore not, for the purpose of understanding the method we propose in the current paper, offer major advantages over the original model. For this reason we concentrate in the following on the isotropic case.

\paragraph{Outline of the paper}
The paper is organised as follows. In \cref{section-qualana} we present one of several possible, precise definitions of the BS model in terms of a Markov chain, and use this representation to derive qualitative properties for arbitrary population sizes. We then turn to the special case of the four\hyp{}species model in \cref{section-n4}, where we derive the full asymptotic fitness distribution. Finally, \cref{section-genrec} is devoted to making further progress towards a similarly explicit solution for larger populations. In particular, we obtain explicit linear recursions for the coefficients that describe the finite\hyp{}time distributions in the five\hyp{}species model.
\paragraph{Notation}
Throughout the paper we denote by $\N$ the set of positive integers and write $n\in\N$ for the number of species in the BS model. Arithmetic operations in indices pertaining to the number of species are always understood to be performed modulo $n$, so as to give a result between $1$ and $n$. The symmetric group on $n$ letters is denoted by $\symgr_n$. For a vector $\bx\in[0,1]^n$ and an integer $\nu\in\{1,\ldots,n\}$ we denote by $(\cdot)_{[\nu]}$ and $(\cdot)_{]\nu[}$ the inner and outer $\nu$\hyp{}centred projections, that is
\begin{align*}
(\cdot)_{[\nu]}:[0,1]^n\to[0,1]^3,\quad& \bx_{[\nu]} = (x_{\nu-1},x_\nu,x_{\nu+1});\\
(\cdot)_{]\nu[}:[0,1]^n\to[0,1]^{n-3},\quad& \bx_{]\nu[} =  (x_{\nu+2},\ldots,x_n,x_1,\ldots,x_{\nu-2}).
\end{align*}
The $i$th components of the vectors $\bx_{[\nu]}$ and $\bx_{]\nu[}$ are denoted by $\bx_{[\nu],i}$ and $\bx_{]\nu[,i}$, respectively. Given another vector $\bxi\in[0,1]^3$ we extend the notation of outer projections by setting
\begin{equation*}
(\cdot)_{]\nu[_{\bxi}}:[0,1]^n\to[0,1]^n,\quad \bx_{]\nu[_{\bxi}} = (\xi_1,\xi_2,\xi_3,x_{\nu+2},\ldots,x_n,x_1,\ldots,x_{\nu-2}).
\end{equation*}
We use the multi\hyp{}index notation $\bx^{\bi}$, $\bx\in[0,1]^n$, $\bi\in\N_0^n$, as a shorthand for the monomial $x_1^{i_1}\cdots x_n^{i_n}$. We further let $\Pb$ denote probability and write $\mathscr{B}(X)$ for the Borel sets of a topological space $X$. Finally, we denote by $\1_{\{\mathcal{E}\}}$ the indicator of an expression $\mathcal{E}$, which is defined to be one if $\mathcal{E}$ is true, and zero else; for a set $S$, we also write $\1_S(x)$ instead of $\1_{\{x\in S\}}$.

\section{Definition of the model and qualitative analysis for general population sizes}
\label{section-qualana}
We interpret the Bak--Sneppen model with $n$ species as a Markov chain $\fitMC_n=(\fitMC_{n,k})_{k\geq 0}$ on the hypercube $[0,1]^n$, see, e.\,g.\ \citep{meyn2009markov} for a general introduction to this topic. To avoid the trivial case in which all fitness parameters are updated at each time step, we assume that $n\geq 4$. For each time $k\in\N$, the values of $\fitMC_{n,k}=\left(f_{n,k,\nu}\right)_{\nu=1,\ldots,n}$ denote the fitnesses of the different species in the population. In order to formalise the heuristic description of the evolutionary processes described in the introduction we first observe that the fitness landscape $\fitMC_{n,k+1}$ at time $k+1$ can be determined easily if one knows the fitness values at time $k$ and, in particular, which species was the least fit at that time. In fact, if at time $k$ the $\nu$th species is the least fit, then the fitness values of the species at the $(\nu-1)$th, $\nu$th, and $(\nu+1)$th site are independent uniformly distributed at time $k+1$, while all other fitness values remain unchanged. For this argument to be valid, it is necessary that $\fitMC_{n,k}$ has a unique minimal value. In fact the BS model, as we have described it, is not well\hyp{}defined without a rule for breaking ties. One natural possibility is to randomly select one of the sites with minimal fitness value. We will, however, not have to concern ourselves with this complication because we will assume that the initial fitness distribution at time $k=0$ is absolutely continuous, in which case ties occur with probability zero. Further formalising the argument, we can partition the event $\{\fitMC_{n,k+1}\in A\}$, $A=\prod_\nu[x_\nu^-,x_\nu^+]\in\mathscr{B}([0,1]^n)$, as $\bigcup_{\nu=1}^n{\{\fitMC_{n,k+1}\in A\}\cap \{\min\fitMC_{n,k}=f_{n,k,\nu}\}}$ and obtain,
\begin{align}
\label{eq-recfnkAabstract}
\Pb\left(\fitMC_{n,k+1}\in A\right) =& \sum_{\nu=1}^n{\Pb\left(\fitMC_{n,k+1}\in A \text{ and } \min\fitMC_{n,k}=f_{n,k,\nu}\right)}\notag\\
  =& \sum_{\nu=1}^n{\Pb\left(\left(\fitMC_{n,k}\right)_{]\nu[}\in A_{]\nu[} \text{ and } \min\fitMC_{n,k}=f_{n,k,\nu}\right)\Pb\left((U_1,U_2,U_3)\in A_{[\nu]}\right)},
\end{align}
where $(\cdot)_{]\nu[}$ and $(\cdot)_{[\nu]}$ denote the inner and outer projections, respectively. The random variables $U_1,U_2,U_3$ are independent and uniformly distributed on $[0,1]$, and thus the last probability equals $\operatorname{Leb}^3\left(A_{[\nu]}\right)$. Put differently, the transition kernels $\Pb_{n,\bx}^1(\cdot)$ of the Markov chain $\fitMC_n$, which are characterized by the equations
\begin{equation*}
\Pb\left(\fitMC_{n,k+1}\in A\left|\fitMC_{n,k}=\bx\right.\right) = \int_A{\Pb_{n,\bx}^1(\dd^n\bxi)},\quad \bx\in[0,1]^n,\quad A\in\mathscr{B}([0,1]^n),
\end{equation*}
are given by
\begin{equation}
\label{eq-transkern}
\Pb_{n,\bx}^1(\dd^n\bxi) = \prod_{\mu\notin\{\nu-1,\nu,\nu+1\}}\delta_{x_\mu}(\dd\xi_\mu)\dd^3(\xi_{\nu-1},\xi_\nu,\xi_{\nu+1}),\quad \nu=\argmin\bx;
\end{equation}
here, $(\bxi,\bx)\in[0,1]^n\times[0,1]^n$, and $\delta_x(\cdot)$ denotes the Dirac measure located at $x$. This quite explicit expression is enough to prove that the Markov chains $\fitMC_n$ are ergodic in a strong sense, which is crucial for our approach to determining their stationary distributions. For the definition of uniform convergence we refer the reader to \cite[Definition (16.6)]{meyn2009markov}.
\begin{proposition}
\label{prop-ergodicity}
For every positive integer $n$, the Markov chain $\fitMC_n$ is uniformly ergodic.
\end{proposition}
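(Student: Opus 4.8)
The plan is to verify the minorization (Doeblin) condition that, on a compact state space, is equivalent to uniform ergodicity \cite[Theorem~16.0.2]{meyn2009markov}: I shall exhibit an integer $m$, a constant $\varepsilon>0$ and a nontrivial measure $\varphi$ on $[0,1]^n$ such that $\Pb_{n,\bx}^m(\cdot)\geq\varepsilon\,\varphi(\cdot)$ for every $\bx\in[0,1]^n$, i.e.\ the whole hypercube is a small set. Concretely I want to show that, started from an arbitrary configuration, the chain can with probability bounded below uniformly in $\bx$ produce after $m$ steps a distribution dominating a fixed multiple of the uniform law on the subcube $(1/3,2/3)^n$.

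The construction I have in mind uses $m=2n$ and two phases of $n$ steps each. In the first phase I condition on the event that each of the $3n$ fitness values resampled during steps $1,\dots,n$ falls in $(1/2,1)$; by \eqref{eq-transkern} these values are independent and uniform, so this event has probability $2^{-3n}$ regardless of $\bx$, and on it the number of coordinates that are $\leq 1/2$ strictly decreases at every step at which it is positive, because the minimum is then such a coordinate and it together with its two neighbours is reset above $1/2$. Hence $\fitMC_{n,n}\in(1/2,1)^n$. The site $\nu^\ast$ of the minimum of $\fitMC_{n,n}$ is unknown, but from here on the trajectory of the minimum can be steered deterministically: in the second phase I condition, at each of the steps $n+1,\dots,2n-1$, on the event that among the three resampled coordinates the one at the site following the current minimum (in cyclic order) lands in $(0,1/4)$ while the other two land in $(1/3,2/3)$; since $1/4<1/3$ this forces the minimum to advance by one site, so that over these $n-1$ steps it occupies each of the $n$ sites in turn. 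At the final step $2n$ I instead condition on all three resampled coordinates landing in $(1/3,2/3)$.

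The point of this choreography is that over the second phase every coordinate gets resampled, and the \emph{last} resampling of each coordinate places it uniformly in $(1/3,2/3)$ — the separate treatment of step $2n$ is precisely what stops the coordinate that is last to become the minimum from being left in $(0,1/4)$. Consequently, conditionally on the event $E$ thus assembled, whose probability equals $2^{-3n}\cdot 36^{-(n-1)}\cdot 27^{-1}$ for every $\bx$, the coordinates of $\fitMC_{n,2n}$ are independent and uniform on $(1/3,2/3)$, a law independent of $\bx$ and of the unknown $\nu^\ast$. This gives the minorization
\begin{equation*}
\Pb_{n,\bx}^{2n}(\cdot)\;\geq\;2^{-3n}\,36^{-(n-1)}\,27^{-1}\;\operatorname{Unif}\!\big((1/3,2/3)^n\big)(\cdot),\qquad \bx\in[0,1]^n,
\end{equation*}
and uniform ergodicity follows.

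The step I expect to need the most care is the bookkeeping in the second phase: one has to check that on $E$ the minimum sits at every step where the construction places it — immediate, since the coordinate designated to become the next minimum is always the unique one below $1/4$, everything else lying in $(1/3,1)$ — and that the resampling windows of the $n$ steps of the phase cover all $n$ sites, with each site's final resampling landing in $(1/3,2/3)$ rather than $(0,1/4)$. Ties in the definition of the minimum are immaterial: they occur with probability zero under the standing assumption that $\fitMC_{n,0}$ is absolutely continuous, and in any event one may adopt an arbitrary tie-breaking rule without affecting the argument.
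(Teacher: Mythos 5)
Your proof is correct and takes essentially the same route as the paper: establish a Doeblin-type minorization showing the whole hypercube $[0,1]^n$ is a small set for a multi-step kernel, then invoke the standard equivalence with uniform ergodicity. The paper merely asserts (citing \citep{gillet2007thesis} and \cite{Aldous1997mixing}) that $\Pb_{n,\bx}^m$ dominates a constant multiple of Lebesgue measure on $[0,1]^n$ for $m\geq n/3-1$ and omits the details, whereas your explicit steering construction with $m=2n$ and the uniform law on $(1/3,2/3)^n$ actually supplies a complete verification of the minorization.
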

\begin{proof}
It is straightforward, albeit tedious, to show that the multi\hyp{}step transition kernels $\Pb_{n,\bx}^m(\cdot)$, which are defined recursively by
\begin{equation*}
\Pb_{n,x}^{m}(A) = \int_{\bxi\in A}\int_{\by\in[0,1]^n}\Pb_{n,\bx}^1\left(\dd^n\by\right)\Pb_{n,\by}^{m-1}\left(\dd^n\bxi\right),\quad m\geq 2,
\end{equation*}
dominate, for each  $m\geq n/3-1$, a constant times Lebesgue measure restricted to $[0,1]^n$. The claim is then a consequence of \cite{Aldous1997mixing}[Theorem B, ii)]. Since a similar result has been obtained in \citep{gillet2007thesis}, we omit the technical details.
\end{proof}
\Cref{prop-ergodicity} implies in particular that there exists a unique invariant distribution $\mu_n$ satisfying
\begin{equation*}
\mu_n(A) = \int_{\bxi\in A}\int_{\bx\in[0,1]^n}{\Pb_{n,\bx}^1(\dd^n\bxi)\mu_n(\dd^n\bx)},\quad \text{for all } A\in\mathscr{B}\left([0,1]^n\right);
\end{equation*}
moreover, the finite\hyp{}time distributions, irrespective of the initial distribution at time $k=0$, converge to $\mu_n$ in the total variation norm. One way to obtain results about the stationary distribution $\mu_n$ is thus to fix a convenient initial distribution for $\fitMC_{n,0}$, to determine the finite\hyp{}time distributions, and to analyse their asymptotic behaviour. The most convenient choice of initial distribution turns out to be the uniform distribution $\mathcal{U}_{[0,1]}^{\otimes n}$. We now proceed to give a qualitative description of the finite\hyp{}time distributions.
\begin{proposition}
\label{prop-finitetimedistqual}
For every $n\geq 4$, the finite\hyp{}time distributions of the Markov chain $\fitMC_n$ are absolutely continuous with piecewise polynomial Lebesgue densities. More precisely, for every $k\in\N$ and $\sigma\in\symgr_{n-3}$, there exist finite index sets $I_{n,k,\sigma}\subset\N_0^{n-3}$ and polynomials $q_{n,k,\sigma}\in \Q[x_1,\ldots,x_{n-3}]$ of the form
\begin{equation}
q_{n,k,\sigma}(\bx) = \sum_{\bi\in I_{n,k,\sigma}}\alpha_{n,k,\sigma,\bi}\bx^{\bi},\quad \bx\in[0,1]^{n-3},
\end{equation}
such that
\begin{align}
\label{eq-defgnk}
g_{n,k}(\bx)\dd^n\bx\coloneqq&\Pb\left(\fitMC_{n,k}\in \dd^n\bx\left|\fitMC_{n,0}\sim\mathcal{U}_{[0,1]}^{\otimes n}\right.\right)\notag\\
    =& \left[\sum_{\nu=1}^n\sum_{\sigma\in\symgr_{n-3}}{q_{n,k,\sigma}\left(\bx_{]\nu[}\right)\1_{\{0\leq x_{]\nu[,\sigma(1)}\leq\ldots\leq x_{]\nu[,\sigma(n-3)}\leq 1\}}}\right]\dd^n\bx.
\end{align}
\end{proposition}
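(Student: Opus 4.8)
The natural approach is induction on $k$. For $k=0$ the chain starts from $\mathcal{U}_{[0,1]}^{\otimes n}$, so $g_{n,0}\equiv 1$ on $[0,1]^n$. Since the $(n-3)$-dimensional simplices $\{0\le y_{\sigma(1)}\le\ldots\le y_{\sigma(n-3)}\le1\}$, $\sigma\in\symgr_{n-3}$, cover $[0,1]^{n-3}$ up to a Lebesgue null set, one may write $1=\tfrac1n\sum_{\nu=1}^n\sum_{\sigma\in\symgr_{n-3}}\1_{\{0\le x_{]\nu[,\sigma(1)}\le\ldots\le x_{]\nu[,\sigma(n-3)}\le1\}}$ for almost every $\bx\in[0,1]^n$; this is \eqref{eq-defgnk} with the constant polynomials $q_{n,0,\sigma}\equiv 1/n$ and $I_{n,0,\sigma}=\{\boldsymbol 0\}$.

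For the inductive step, assume $g_{n,k}$ has the form \eqref{eq-defgnk}. The first task is to unwind the recursion \eqref{eq-recfnkAabstract}. Writing a box $A=\prod_\mu[x_\mu^-,x_\mu^+]$ in $\nu$-centred coordinates as $A_{[\nu]}\times A_{]\nu[}$ and using that, conditionally on $\{\min\fitMC_{n,k}=f_{n,k,\nu}\}$, the three inner coordinates of $\fitMC_{n,k+1}$ are independent uniform while its outer coordinates agree with those of $\fitMC_{n,k}$, one obtains $g_{n,k+1}(\bxi)=\sum_{\nu=1}^n h_{n,k,\nu}\big(\bxi_{]\nu[}\big)$ for almost every $\bxi\in[0,1]^n$, where
\[
h_{n,k,\nu}(\by)=\int_{\{\bz\in[0,1]^3\,:\,z_2<z_1\wedge z_3\wedge\min_{1\le j\le n-3}y_j\}}g_{n,k}\big(\iota_\nu(\bz,\by)\big)\,\dd^3\bz,\qquad\by\in[0,1]^{n-3},
\]
and $\iota_\nu(\bz,\by)\in[0,1]^n$ denotes the configuration with $z_1,z_2,z_3$ at sites $\nu-1,\nu,\nu+1$ and the entries of $\by$ at the remaining sites, in the order of the outer projection. (The factor $\operatorname{Leb}^3(A_{[\nu]})$ of \eqref{eq-recfnkAabstract} is absorbed into the integral over the inner coordinates, and $\{\min\fitMC_{n,k}=f_{n,k,\nu}\}$ is, in these coordinates, exactly the region $\{z_2<z_1\wedge z_3\wedge\min_j y_j\}$.) Since the dynamics and the initial law are both invariant under cyclic shifts of the sites, $g_{n,k}$ is as well, so $h_{n,k,\nu}$ is the same function of $\by$ for every $\nu$; this is what lets the polynomials $q_{n,k+1,\sigma}$ in \eqref{eq-defgnk} be chosen independently of $\nu$. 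Because $\1_{[0,1]^3}(\bxi_{[\nu]})=1$ for $\bxi\in[0,1]^n$, it now suffices to show that $h_{n,k,\nu}$ is, on each simplex $\{0\le y_{\sigma(1)}\le\ldots\le y_{\sigma(n-3)}\le1\}$, a polynomial with rational coefficients involving only finitely many monomials.

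This last claim is the technical core. Substituting the representation \eqref{eq-defgnk} of $g_{n,k}$ under the integral sign writes $h_{n,k,\nu}(\by)$ as a finite sum over a second site $\nu'\in\{1,\ldots,n\}$ and a permutation $\tau\in\symgr_{n-3}$. For $\by$ in a fixed ordered simplex, $\min_j y_j=y_{\sigma(1)}$, so the $\bz$-integration runs over a polytope whose facets lie among the hyperplanes $z_i=z_j$, $z_i=0$, $z_i=1$, $z_i=y_{\sigma(1)}$; and according to whether $\nu'=\nu$, $\nu'\in\{\nu\pm1\}$, $\nu'\in\{\nu\pm2\}$, or $\nu'$ is further from $\nu$, the outer $\nu'$-projection of $\iota_\nu(\bz,\by)$ contains none, one, two, or all three of $z_1,z_2,z_3$ together with the complementary number of entries of $\by$. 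Thus $q_{n,k,\tau}$ evaluated at that projection is a polynomial in those $z$'s and $y$'s, and the accompanying indicator adds inequalities comparing the $z_i$ with the $y_j$ and the $y_i$ with one another. Integrating such a polynomial over the resulting (still polytopal) region in $(z_1,z_2,z_3)$ by an iterated one-dimensional integration, every limit that appears is $0$, $1$, one of the $y_j$, or a previously integrated variable, and the only case distinctions forced are among the orderings of $0$, $1$ and the $y_j$. Hence the outcome is, on each such ordered region---that is, on each simplex $\{0\le y_{\sigma'(1)}\le\ldots\le y_{\sigma'(n-3)}\le1\}$---a polynomial, with rational coefficients (integrating a monomial over a polytope whose vertices have coordinates drawn from $\{0,1,y_1,\ldots,y_{n-3}\}$ yields a polynomial in the $y_j$ with coefficients in $\Q$) and of bounded degree (each of the three integrations raises the degree by at most one). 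Summing over $\tau$ and $\nu'$, and then over $\nu$, puts $g_{n,k+1}$ into the form \eqref{eq-defgnk} with finite index sets, completing the induction.

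The step I expect to demand the most care is the bookkeeping just sketched: one must verify that no genuinely new cut locus is created during the iterated integration---that every breakpoint is a coordinate comparison $y_i=y_j$ (or $y_i\in\{0,1\}$) and never a more complicated surface---so that the regions on which $g_{n,k+1}$ is polynomial are again precisely the coordinate-ordered simplices, and not some finer subdivision, and that rationality of the coefficients survives each integration. The remaining ingredients---measurability, Fubini's theorem, and a degree count bounding $|I_{n,k,\sigma}|$---are routine.
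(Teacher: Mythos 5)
Your proposal is correct and follows essentially the same route as the paper: induction on $k$, using the recursion $g_{n,k+1}(\bx)=\sum_{\nu}\int_{[0,1]^3}\1_{\{\xi_2<\min(\xi_1,\xi_3,\bx_{]\nu[})\}}g_{n,k}\bigl(\bx_{]\nu[_{\bxi}}\bigr)\dd^3\bxi$ together with the fact that integrating polynomials over polytopes cut out by coordinate comparisons yields piecewise polynomials with rational coefficients on the ordered simplices. You merely spell out details the paper leaves implicit (the base case, the $\nu$-independence via cyclic symmetry, and the check that no breakpoints other than coordinate orderings arise), so no further comment is needed.
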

\begin{proof}

The finite\hyp{}time distributions are absolutely continuous because the initial distribution at time $k=0$ is, and the transition kernel \ref{eq-transkern} preserves absolute continuity. A reformulation of \cref{eq-recfnkAabstract} shows that the densities $g_{n,k}$ satisfy the recursion
\begin{equation}
\label{eq-recgnk}
g_{n,k+1}(\bx)=\sum_{\nu=1}^n\int_{[0,1]^3} \1_{\left\{\xi_2<\min(\xi_1,\xi_3,\bx_{]\nu[})\right\}}g_{n,k}\left(\bx_{]\nu[_{\bxi}}\right)\dd^3\bxi.
\end{equation}
The claim thus follows by induction on $k$, using the fact that integrals of polynomial expressions over bounded convex polytopes are again polynomial expressions. More precisely, we assume that, for some $k\geq 0$, the function $\bx\mapsto g_{n,k}(\bx)$ is of the form asserted in \cref{eq-defgnk}. It then follows that
\begin{align*}
g_{n,k+1}(\bx) =& \sum_{\nu,\mu,\sigma,\bi}{\alpha_{n,k,\sigma,\bi}}\int_{[0,1]^3}\1_{\left\{\xi_2<\min(\xi_1,\xi_3,\bx_{]\nu[})\right\}}\1_{\left\{\left(\bx_{]\nu[_{\bxi}}\right)_{]\mu[}\sim\sigma\right\}}\left[\left(\bx_{]\nu[_{\bxi}}\right)_{]\mu[}\right]^{\bi}\dd^3\bxi,
\end{align*}
where, for $\by\in[0,1]^{n-3}$ and $\sigma\in\symgr_{n-3}$, the expression $\by\sim\sigma$ is a shorthand for $0\leq y_{\sigma(1)}\leq\ldots\leq y_{\sigma(n-3)}$. Since, for each $\mu=1,\ldots,n$, the integrand is equal to $p\left(\bx_{]\nu[}\right)$ for some polynomial function $p\in\Q[x_1,\ldots,x_{n-3}]$, the claim follows.
\end{proof} 
For the special cases $n=4,5$ we will see later how \cref{eq-recgnk} can be translated into explicit recursions for the coefficients $\alpha_{n,k,\sigma,\bi}$. Before we do that, however, we compute, for general $n$, the density $g_{n,1}$ of the fitness distribution after the first evolution step.
\begin{lemma}
For each $n\geq 4$, the density $g_{n,1}$ is given by
\begin{equation}
g_{n,1}(\bx) = \1_{[0,1]^n}(\bx)\sum_{\nu=1}^n{p(\min\bx_{]\nu[})},\quad p(x) = x\left(\frac13x^2-x+1\right).
\end{equation}
\end{lemma}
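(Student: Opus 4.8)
The plan is to evaluate the recursion \cref{eq-recgnk} a single time, starting from the density $g_{n,0}=\1_{[0,1]^n}$ corresponding to the uniform initial distribution $\mathcal{U}_{[0,1]}^{\otimes n}$. Since $\fitMC_{n,1}$ is almost surely $[0,1]^n$-valued, it suffices to verify the formula for $\bx\in[0,1]^n$; outside, the density vanishes, consistently with the indicator $\1_{[0,1]^n}$ on the right-hand side. Setting $k=0$ in \cref{eq-recgnk} gives
\begin{equation*}
g_{n,1}(\bx)=\sum_{\nu=1}^n\int_{[0,1]^3}\1_{\{\xi_2<\min(\xi_1,\xi_3,\bx_{]\nu[})\}}\,g_{n,0}\!\left(\bx_{]\nu[_{\bxi}}\right)\dd^3\bxi,
\end{equation*}
and the key observation is that $\bx_{]\nu[_{\bxi}}$ consists of the fresh coordinates $\xi_1,\xi_2,\xi_3$ together with the coordinates making up $\bx_{]\nu[}$, so that for $\bx\in[0,1]^n$ and $\bxi$ ranging over the domain of integration $[0,1]^3$ the factor $g_{n,0}(\bx_{]\nu[_{\bxi}})$ is identically $1$. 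The integrand thus decouples, and writing $m_\nu\coloneqq\min\bx_{]\nu[}$ we are left with $g_{n,1}(\bx)=\sum_{\nu=1}^n J(m_\nu)$, where
\begin{equation*}
J(m)\coloneqq\int_{[0,1]^3}\1_{\{\xi_2<\xi_1\}}\,\1_{\{\xi_2<\xi_3\}}\,\1_{\{\xi_2<m\}}\,\dd^3\bxi,\qquad m\in[0,1].
\end{equation*}

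The remaining step is the elementary computation of $J$. Integrating out $\xi_1$ and $\xi_3$ at fixed $\xi_2$ produces the factor $(1-\xi_2)^2=\operatorname{Leb}^2\{(\xi_1,\xi_3)\in[0,1]^2:\xi_1>\xi_2,\ \xi_3>\xi_2\}$, while the constraint $\xi_2<m$ restricts the outer integration to $\xi_2\in[0,m]$. Hence
\begin{equation*}
J(m)=\int_0^m(1-\xi_2)^2\,\dd\xi_2=\frac{1-(1-m)^3}{3}=m-m^2+\tfrac{1}{3}m^3=p(m),
\end{equation*}
and substituting $m=m_\nu$ and summing over $\nu$ yields the asserted formula.

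Since every step is entirely explicit, there is no serious obstacle. The only points that require a little care are, first, the bookkeeping showing that the three fresh coordinates $\xi_1,\xi_2,\xi_3$ appearing in $\bx_{]\nu[_{\bxi}}$ are disjoint from the $n-3$ entries of $\bx_{]\nu[}$, so that the $g_{n,0}$-factor is genuinely constant in $\bxi$ and pulls out of the integral; and, second, recognising that the minimum in the indicator of \cref{eq-recgnk} is taken over the entries of $\bx_{]\nu[}$ together with $\xi_1$ and $\xi_3$, which is exactly what reduces the computation to the single scalar parameter $m_\nu=\min\bx_{]\nu[}$ and makes the answer a sum over $\nu$ of the fixed one-variable polynomial $p$.
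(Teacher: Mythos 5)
Your proof is correct and follows essentially the same route as the paper: apply the recursion \cref{eq-recgnk} with $g_{n,0}\equiv\1_{[0,1]^n}$ and evaluate the resulting integral, which equals $\tfrac13\bigl[1-(1-m)^3\bigr]=p(m)$ with $m=\min\bx_{]\nu[}$. The only cosmetic difference is that you compute the integral by direct integration over $\xi_2$, whereas the paper notes the equivalent probabilistic identity $\Pb\left(U_2\leq\min(U_1,U_3,y)\right)=\tfrac13\Pb\left(\min(U_1,U_2,U_3)\leq y\right)$.
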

\begin{proof}
The claim follows from \cref{eq-recgnk} and the observation that an initial distribution of $\mathcal{U}_{[0,1]}^{\otimes n}$ corresponds to $g_{n,0}(\bx) \equiv 1$. Alternatively, one can use the fact that
\begin{equation*}
\Pb\left(U_2\leq\min(U_1,U_3,y)\right)=\frac{1}{3}\Pb\left(\min(U_1,U_2,U_3)\leq y\right)=\frac{1}{3}\left[1-(1-y)^3\right].\qedhere
\end{equation*}
\end{proof}

\section{Derivation of the asymptotic fitness distribution for four species}
\label{section-n4}
In this section we expand on the results from the last section to compute the stationary distribution in the Bak--Sneppen model with four species, $n=4$. This is much easier than the general situation and we are able to obtain a complete picture of the joint asymptotic behaviour of the four fitness\hyp{}parameters. Partial generalisations to larger populations are discussed in \cref{section-genrec}. For notational convenience we suppress the index $n=4$ in this section. \cref{prop-finitetimedistqual} then implies that there are rational numbers $\alpha_{k,i}$ and positive integers $d_k$ such that the finite\hyp{}time distributions of $\fitMC=\fitMC_4$ are given by
\begin{equation*}
\Pb\left(\fitMC_k\in \dd^4\bx\left|\fitMC_0\sim\mathcal{U}_{[0,1]}^{\otimes 4}\right.\right) = \left[\sum_{\nu=1}^4{q_k(x_\nu)}\right]1_{[0,1]^4}(\bx)\dd^4\bx,\quad q_k(x) = \sum_{i=0}^{d_k}{\alpha_{k,i}x^i},
\end{equation*}
so that $g_k(\bx)=g_{4,k}(\bx)=\sum_i\sum_{\nu=1}^4{\alpha_{k,i}x_\nu^i}$. The Chapman--Kolmogorov\hyp{}type equation \ref{eq-recgnk} implies that, for $k\geq 0$,
\begin{equation*}
\sum_{\nu=1}^4\sum_{i}\alpha_{k+1,i}x_\nu^i = \sum_{\nu=1}^4\sum_i\alpha_{k,i}\int_{[0,1]^3} \1_{\{\xi_2<\min(\xi_1,\xi_3,x_\nu)\}}\left[x_\nu^i+\xi_2^i+\xi_1^i+\xi_3^i\right]\dd^3\bxi.
\end{equation*}
Equating each of the four $\nu$\hyp{}summands separately and evaluating the integrals on the right side, one sees that, for $\nu=1,\ldots,4$,
\begin{align}
\label{eq-comparecoeff4}
&\sum_{i}\alpha_{k+1,i}x_\nu^i \notag\\  
=& \sum_i\alpha_{k,i}\left[x_\nu^i\int_{[0,1]^3} \1_{\{\xi_2<\min(\xi_1,\xi_3,x_\nu)\}}\dd^3\bxi+\int_{[0,1]^3} \1_{\{\xi_2<\min(\xi_1,\xi_3,x_\nu)\}}\xi_2^i\dd^3\bxi+2\int_{[0,1]^3} \1_{\{\xi_2<\min(\xi_1,\xi_3,x_\nu)\}}\xi_1^i\dd^3\bxi\right]\notag\\
  =& \sum_i\alpha_{k,i}x_\nu^{i+1}\left[ \left(\frac13 x_\nu^2-x_\nu+1\right) + \left(\frac{x_\nu^2}{i+3}-\frac{2x_\nu}{i+2}+\frac{1}{i+1}\right) + \frac{2x_\nu}{i+1}\left(\frac{x_\nu}{i+3}-\frac{1}{i+2}\right)\right]+x_\nu(2-x_\nu)\sum_i\frac{\alpha_{k,i}}{i+1}.
\end{align}
Comparing coefficients of equal powers  of $x_\nu$ on both sides of this equation, one arrives at the following recursion for the coefficients $\alpha_{k,i}$.
\begin{proposition}
\label{prop-rec-alpha4}
For each $k\ge 1$, the constant term $\alpha_{k,0}$ vanishes. The remaining coefficients $\alpha_{k,i}$ satisfy the recursion
\begin{equation}
\label{eq-rec-alpha4}
 \alpha_{k+1,i}=\begin{cases} 
2\sum_{j=1}^{3k+1}{\frac{\alpha_{k,j}}{j+1}}, & i=1,\\
\frac{3}{2}\alpha_{k,1}-\sum_{j=1}^{3k+1}{\frac{\alpha_{k,j}}{j+1}}, & i=2,\\
-2\alpha_{k,1}+\frac{4}{3}\alpha_{k,2}, & i=3,\\
\frac{i+1}{i}\alpha_{k,i-1}-\frac{i+1}{i-1}\alpha_{k,i-2}+\frac{1}{3}\frac{i+1}{i-2}\alpha_{k,i-3}, & i\geq 4, 
               \end{cases}\quad k\geq 1.
\end{equation}
In particular, $\alpha_{k,i}$ is equal to zero if $i\geq 3k+1$.
\end{proposition}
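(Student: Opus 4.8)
The natural approach is induction on $k$, taking the explicit density $g_1$ from the preceding lemma as the base case. Since for $n=4$ the outer projection $\bx_{]\nu[}$ is one-dimensional, that lemma gives $q_1(x)=p(x)=x-x^2+\tfrac13x^3$, so $\alpha_{1,0}=0$, $\alpha_{1,1}=1$, $\alpha_{1,2}=-1$, $\alpha_{1,3}=\tfrac13$, and $\alpha_{1,i}=0$ for $i\geq4$; in particular the final assertion of the proposition holds for $k=1$. The vanishing of the constant terms is even more direct: every summand on the right-hand side of \eqref{eq-comparecoeff4} carries a factor $x_\nu$, so $\alpha_{k+1,0}=0$ for every $k\geq0$, which is the first claim.

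The heart of the matter is to bring the bracket on the right of \eqref{eq-comparecoeff4} into closed form. Writing that bracket as $B_i(x)$ and combining its three rational pieces over the common denominator $3(i+1)(i+2)(i+3)$, a short computation — the only genuinely computational step — should give
\begin{equation*}
B_i(x)=\frac{i+2}{i+1}-\frac{i+3}{i+1}\,x+\frac{i+4}{3(i+1)}\,x^2,
\end{equation*}
the point being that each coefficient telescopes: the constant term is $1+\tfrac1{i+1}$, and after pulling out $(i+2)$ from the numerator the linear term collapses to $-\tfrac{i+3}{i+1}$, and similarly $(i+3)$ for the quadratic term. Substituting this back, the $\nu$-summand of \eqref{eq-comparecoeff4} reads
\begin{equation*}
\sum_i\alpha_{k+1,i}x_\nu^i=\sum_i\alpha_{k,i}\Bigl[\tfrac{i+2}{i+1}x_\nu^{i+1}-\tfrac{i+3}{i+1}x_\nu^{i+2}+\tfrac{i+4}{3(i+1)}x_\nu^{i+3}\Bigr]+x_\nu(2-x_\nu)\sum_i\frac{\alpha_{k,i}}{i+1}.
\end{equation*}

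It then remains to compare the coefficient of $x_\nu^m$ on both sides. The monomials $x_\nu^{i+1},x_\nu^{i+2},x_\nu^{i+3}$ contribute to $x_\nu^m$ when $i=m-1,m-2,m-3$ respectively, while $x_\nu(2-x_\nu)\sum_i\alpha_{k,i}/(i+1)$ contributes only to $m=1$ and $m=2$. For $m\geq4$ there are no boundary contributions and one reads off exactly the four-term recursion in \eqref{eq-rec-alpha4}; for $m=1,2,3$ one picks up the extra terms, and here I use the inductive hypothesis $\alpha_{k,0}=0$ (valid for $k\geq1$) to discard the would-be $\alpha_{k,0}$ contributions, arriving at the three stated special cases. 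The degree bound is then immediate by induction: if $\alpha_{k,j}=0$ for $j\geq3k+1$, then in the $m\geq4$ branch all of $\alpha_{k,m-1},\alpha_{k,m-2},\alpha_{k,m-3}$ vanish as soon as $m-3\geq3k+1$, i.e. $m\geq3(k+1)+1$, whence $\alpha_{k+1,m}=0$ there; this also legitimises truncating the sums appearing in the $m=1$ and $m=2$ cases at $j=3k+1$. I do not anticipate any conceptual obstacle; the only places requiring care are the algebraic identity for $B_i(x)$ and the index shifts in the low-order cases $m=1,2,3$, where the boundary terms of the convolution and the inductive hypothesis $\alpha_{k,0}=0$ both intervene.
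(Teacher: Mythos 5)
Your proposal is correct and follows essentially the same route as the paper: you make explicit the coefficient comparison in \eqref{eq-comparecoeff4} (your closed form for the bracket, $\tfrac{i+2}{i+1}-\tfrac{i+3}{i+1}x+\tfrac{i+4}{3(i+1)}x^2$, checks out, as do the boundary cases $m=1,2,3$ using $\alpha_{k,0}=0$), which is exactly what the paper leaves implicit before stating \cref{prop-rec-alpha4}. The only cosmetic difference is the degree bound, which you obtain by a one-step induction from the explicit $k=1$ coefficients, whereas the paper unrolls the $i\geq4$ branch of \eqref{eq-rec-alpha4} down to $k=0$ and uses $\alpha_{0,i}=0$ for $i\geq1$; the two arguments are equivalent.
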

\begin{proof}
Only the last statement requires further justification. By the last line of the recursion \ref{eq-rec-alpha4}, the coefficient $\alpha_{k+1,i}$ can be written, for any $\ell=1,\ldots,k+1$, as a linear combination of the coefficients $\alpha_{k+1-\ell,i-m}$, $m=\ell,\ldots,3\ell$. Choosing $\ell=k+1$ and observing that $\alpha_0,i$ is zero for $i\geq 1$, it follows that $\alpha_{k+1.i}$ vanishes for $i\geq 3(k+1)+1$.
\end{proof}

The next step consists in finding as explicit a solution to this recursive system as possible. Fortunately, since we are mainly interested in the asymptotic behaviour of the Bak--Sneppen model, it suffices to concentrate on large values of $k$.
\begin{lemma}
\label{lemma-beta}
For every $i\in\N$, there exists a rational number $\beta_i$ such that $\alpha_{k,i}=\beta_i$ for every $k\geq i+1$. In particular, $\lim_{k\to\infty}\alpha_{k,i} = \beta_i$.
\end{lemma}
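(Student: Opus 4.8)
The plan is to argue by strong induction on $i$, the crux being a preliminary observation that the weighted coefficient sum $S_k\coloneqq\sum_{j\geq 1}\alpha_{k,j}/(j+1)$ does not depend on $k$ for $k\geq 1$. Indeed, for each $k$ the function $g_k$ is a probability density on $[0,1]^4$, and on the unit cube it equals $\sum_{\nu=1}^4 q_k(x_\nu)$; integrating over $[0,1]^4$ and using that $\alpha_{k,0}=0$ for $k\geq 1$ (\cref{prop-rec-alpha4}) yields $1=\int_{[0,1]^4}g_k(\bx)\dd^4\bx=4\int_0^1 q_k(x)\dd x=4S_k$, so that $S_k=\tfrac{1}{4}$ for every $k\geq 1$. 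Alternatively one could extract this identity from \cref{eq-rec-alpha4} by a telescoping computation, but the normalisation argument is shorter.

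Once $S_k$ is pinned down, the base cases $i\in\{1,2,3\}$ follow by reading off the first three branches of \cref{eq-rec-alpha4}. The first branch gives $\alpha_{k+1,1}=2S_k=\tfrac{1}{2}$ for every $k\geq 1$, hence $\alpha_{k,1}=\tfrac{1}{2}$ for $k\geq 2$; feeding this into the second branch gives $\alpha_{k+1,2}=\tfrac{3}{2}\cdot\tfrac{1}{2}-\tfrac{1}{4}=\tfrac{1}{2}$ for $k\geq 2$, hence $\alpha_{k,2}=\tfrac{1}{2}$ for $k\geq 3$; and the third branch then gives $\alpha_{k+1,3}=-2\cdot\tfrac{1}{2}+\tfrac{4}{3}\cdot\tfrac{1}{2}=-\tfrac{1}{3}$ for $k\geq 3$, hence $\alpha_{k,3}=-\tfrac{1}{3}$ for $k\geq 4$. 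Thus one may take $\beta_1=\beta_2=\tfrac{1}{2}$ and $\beta_3=-\tfrac{1}{3}$, all rational, and in each case stabilisation sets in from $k=i+1$ on.

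For the inductive step, fix $i\geq 4$ and assume the claim for $i-1$, $i-2$, $i-3$, with rational limits $\beta_{i-1},\beta_{i-2},\beta_{i-3}$ attained from $k\geq i$, $k\geq i-1$, $k\geq i-2$ respectively. The last branch of \cref{eq-rec-alpha4} expresses $\alpha_{k+1,i}$ as the fixed rational linear combination $\tfrac{i+1}{i}\alpha_{k,i-1}-\tfrac{i+1}{i-1}\alpha_{k,i-2}+\tfrac{1}{3}\tfrac{i+1}{i-2}\alpha_{k,i-3}$. For every $k\geq i$ each of the three terms on the right has already reached its limit, so $\alpha_{k+1,i}$ equals the constant $\beta_i\coloneqq\tfrac{i+1}{i}\beta_{i-1}-\tfrac{i+1}{i-1}\beta_{i-2}+\tfrac{1}{3}\tfrac{i+1}{i-2}\beta_{i-3}\in\Q$ for all $k\geq i$, i.e.\ $\alpha_{k,i}=\beta_i$ for all $k\geq i+1$. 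The concluding statement about the limit is then immediate.

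The only genuinely non-routine ingredient is the constancy of $S_k$: without it the $i=1$ and $i=2$ branches of \cref{eq-rec-alpha4} couple $\alpha_{k+1,1}$ and $\alpha_{k+1,2}$ to the whole tuple $(\alpha_{k,j})_j$ and the induction cannot even start; everything after that is bookkeeping with a three-term linear recursion having explicit rational coefficients.
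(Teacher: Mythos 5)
Your proof is correct and follows essentially the same route as the paper: the normalisation identity $\sum_j\alpha_{k,j}/(j+1)=\tfrac14$ obtained from $g_k$ being a probability density, the explicit stabilisation of $\alpha_{k,1},\alpha_{k,2},\alpha_{k,3}$ from the first three branches of \cref{eq-rec-alpha4}, and induction via the three-term last branch for $i\geq 4$. You merely spell out the induction step and stabilisation times that the paper leaves as ``the claim follows by induction.''
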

\begin{proof}
The claim is true for $i=1$; this follows from the fact that $g_k:\bx\mapsto\sum_\nu q_k(x_\nu)$ is a probability density function and thus integrates to one; by symmetry, $q_k$ then integrates to $1/4$. Observing that $\sum_j{\alpha_{k,j}/(j+1)}$ equals $\int q_k(x)\dd x$, it follows from \cref{eq-rec-alpha4} that, for all $k\geq 1$,
\begin{equation*}
\alpha_{k+1,1} = 2\sum_{j=1}^{3k+1}{\frac{\alpha_{k,j}}{j+1}} = 2\frac14 = \frac12.
\end{equation*}
Similarly, $\alpha_{k,2}=1/2$, $k\geq3$, and $\alpha_{k,3} = -1/3$, $k\geq4$,  and so the claim is true for $i=2,3$ as well. For general $i\geq 4$, the claim follows by induction.
\end{proof}

\begin{proposition}
\label{prop-limitq4}
Let $\beta_i=\alpha_{i+1,i}$ be the numbers from \cref{lemma-beta}. The functions $q_k$ converge uniformly on $[0,1]$, i.\,e.\ in the supremum norm, to $q\coloneqq x\mapsto\sum_i\beta_ix^i$, which is given by
\begin{equation}
q(x) = \frac{3}{2}\frac{x(3-x(3-x))}{(3-x(3-x(3-x)))^2},\quad 0\leq x\leq 1.
\end{equation}
\end{proposition}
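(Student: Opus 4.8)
The plan is to rewrite the coefficient recursion \eqref{eq-comparecoeff4}--\eqref{eq-rec-alpha4} as a fixed-point equation for a contraction on $C[0,1]$, to solve that equation in closed form, and only afterwards to identify its solution with the series $\sum_i\beta_i x^i$. Summing \eqref{eq-comparecoeff4} over $i$ and using that $\sum_i\alpha_{k,i}/(i+1)=\int_0^1 q_k(x)\,\dd x=1/4$ — which holds because $\bx\mapsto\sum_\nu q_k(x_\nu)$ is a probability density on $[0,1]^4$ — a term-by-term computation gives $q_{k+1}=\mathcal{T}q_k$, where $\mathcal{T}$ is the affine operator
\[
(\mathcal{T}h)(x)=x\Bigl(\tfrac13 x^2-x+1\Bigr)h(x)+(1-x)^2\int_0^x h(t)\,\dd t+\tfrac14 x(2-x),\qquad h\in C[0,1].
\]
On $[0,1]$ one has $0\leq x(\tfrac13 x^2-x+1)\leq\tfrac13$ and $x(1-x)^2\leq\tfrac4{27}$, so the linear part of $\mathcal{T}$ is, in the supremum norm, a contraction with constant at most $\tfrac13+\tfrac4{27}=\tfrac{13}{27}$. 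Hence $\mathcal{T}$ has a unique fixed point $q_\ast\in C[0,1]$, and $q_k=\mathcal{T}^k q_0$ converges to $q_\ast$ uniformly on $[0,1]$. This already establishes the claimed convergence, so it remains to compute $q_\ast$ and to show $q_\ast=\sum_i\beta_i x^i$.

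For the closed form, put $R(x)=\int_0^x q_\ast(t)\,\dd t$, so that $R'=q_\ast$ and $R(0)=0$. The equation $q_\ast=\mathcal{T}q_\ast$ becomes $\bigl(1-x(\tfrac13 x^2-x+1)\bigr)R'=(1-x)^2R+\tfrac14 x(2-x)$. Since $1-x(\tfrac13 x^2-x+1)=\tfrac13\bigl(2+(1-x)^3\bigr)=:\tfrac13 u(x)$ and $u'(x)=-3(1-x)^2$, this is exactly $\tfrac13(uR)'=\tfrac14 x(2-x)$; integrating and using $u(0)R(0)=0$ gives $u(x)R(x)=\tfrac14 x^2(3-x)$. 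As $u>0$ on $[0,1]$ we may divide by it, and differentiating $R=x^2(3-x)/(4u)$ and simplifying yields $q_\ast=R'=\tfrac32\, x(3-x(3-x))/\bigl(3-x(3-x(3-x))\bigr)^2$, which is the asserted formula.

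It remains to check $q_\ast=\sum_i\beta_i x^i$. The rational function $q_\ast$ has its poles at the three roots of $(1-x)^3=-2$, all of absolute value larger than $1$, so it is analytic on a disc of radius $>1$ about the origin and its Taylor series $q_\ast=\sum_i c_i x^i$ converges on $[0,1]$. Expanding the explicit formula gives $c_1=\tfrac12$, $c_2=\tfrac12$, $c_3=-\tfrac13$, which coincide with $\beta_1,\beta_2,\beta_3$ as computed in the proof of \cref{lemma-beta}; and matching coefficients of $x^i$ for $i\geq4$ in $q_\ast=\mathcal{T}q_\ast$ shows that the $c_i$ satisfy $c_i=\tfrac{i+1}{i}c_{i-1}-\tfrac{i+1}{i-1}c_{i-2}+\tfrac13\tfrac{i+1}{i-2}c_{i-3}$, which by \cref{prop-rec-alpha4} is precisely the recursion obeyed by the $\beta_i$ in the limit $k\to\infty$. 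By induction $c_i=\beta_i$ for every $i$, hence $q_\ast=\sum_i\beta_i x^i=q$; together with the uniform convergence $q_k\to q_\ast$ this proves the proposition.

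I expect the main difficulty to be organisational rather than computational: convergence of $\sum_i\beta_i x^i$ is not available at the outset, so that series cannot serve as the starting object, and one has to be careful not to argue in a circle. The way around this is to let the contraction principle produce the limit $q_\ast$ abstractly, to use its closed (rational, pole-free in a neighbourhood of $[0,1]$) form to supply the missing analyticity, and only then to recognise $q_\ast$ as $\sum_i\beta_i x^i$. The estimates behind the contraction constant $\tfrac{13}{27}$ and the algebraic simplification of $R'$ are routine, and I would not dwell on them.
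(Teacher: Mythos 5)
Your proof is correct, but it follows a genuinely different route from the paper, and it is worth recording how the two compare. The paper gets the uniform convergence of $q_k$ "for free" from uniform ergodicity (\cref{prop-ergodicity}) together with \cref{lemma-beta}, and then obtains the closed form by solving the homogeneous three-term recurrence for $c_i=(i+1)^{-1}\beta_i$ explicitly: it factors the characteristic polynomial $z^3-z^2+z-\tfrac13$, solves a Vandermonde system for the constants $C_j$, and sums $\sum_i(i+1)(x\lambda_j)^i$ in closed form. You instead observe that the passage $q_k\mapsto q_{k+1}$ encoded in \eqref{eq-comparecoeff4} collapses (after the telescoping simplifications $\tfrac{1}{i+3}+\tfrac{2}{(i+1)(i+3)}=\tfrac{1}{i+1}$ and $\tfrac{2}{i+2}+\tfrac{2}{(i+1)(i+2)}=\tfrac{2}{i+1}$, which do check out and are really the crux of your reformulation, so they deserve one displayed line rather than the label ``routine'') to the affine map $(\mathcal{T}h)(x)=x(\tfrac13x^2-x+1)h(x)+(1-x)^2\int_0^x h+\tfrac14x(2-x)$, valid for every $k$ because $\int_0^1 q_k=\tfrac14$; the Banach fixed-point theorem then gives the uniform convergence without invoking \cref{prop-ergodicity} at all, and even with a geometric rate (your constant $\tfrac{13}{27}$ is a safe overestimate; the sharp bound for your estimate is $\tfrac{5}{12}$). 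Your integrating-factor identity $\tfrac13\bigl(uR\bigr)'=\tfrac14x(2-x)$ with $u=2+(1-x)^3=3-x(3-x(3-x))$ is verified correctly and explains the ``all Horner coefficients equal to $3$'' structure that the paper remarks on without explanation, which is a nice by-product. Your final identification step is also sound: the poles of $q_\ast$ lie outside the closed unit disc (smallest modulus $\approx1.15$), the Taylor coefficients of $q_\ast$ satisfy, for $i\geq4$, exactly the limiting form of \eqref{eq-rec-alpha4} from \cref{prop-rec-alpha4}, and the initial values $\tfrac12,\tfrac12,-\tfrac13$ match \cref{lemma-beta}, so $q_\ast=\sum_i\beta_ix^i$ by induction; this also handles the convergence of the series $\sum_i\beta_ix^i$ on $[0,1]$, which the paper leaves implicit. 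In short: the paper's argument is shorter because it leans on ergodicity and an explicit eigenvalue computation, while yours is self-contained at the level of the coefficient recursion, avoids the complex characteristic roots, and yields a quantitative convergence rate.
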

\begin{proof}
It follows from \cref{prop-ergodicity} that the functions $q_k$ converge uniformly to some limit, and because $\alpha_{k,i}$ converges, for each $i$, to $\beta_i$ by \cref{lemma-beta}, this limit must be $x\mapsto\sum_i\beta_ix^i$.
Using \cref{prop-rec-alpha4} one sees that the sequence $c_i = (i+1)^{-1}\alpha_{i+1,i}$ satisfies the homogeneous four-term recursion
\begin{equation*}
c_i = c_{i-1} - c_{i-2} + \frac13 c_{i-3},\quad i\geq 4.
\end{equation*}
By the theory of homogeneous linear recurrence equations, $c_i$ is given by
\begin{equation*}
c_i=C_1\lambda_1^i+C_2\lambda_2^i+C_3\lambda_3^i,
\end{equation*}
where
\begin{equation*}
\lambda_1 = \frac{1}{3} \left(1-2^{1/3}+2^{2/3}\right),\quad \lambda_{2,3} = \frac{1}{6} \left(1+2^{1/3}\right) \left(2-2^{1/3}\pm 2^{1/3}\sqrt{3}\ii\right) 
\end{equation*}
are the roots of the characteristic polynomial $z^3-z^2+z-1/3$, and $C_j$ are constants determined by the initial conditions. In the present case, as the proof of \cref{lemma-beta} shows, these initial values are given by
\begin{equation*}
c_1 = \frac{1}{2}\alpha_{2,1} = \frac{1}{4},\quad c_2 = \frac{1}{3}\alpha_{3,2} = \frac{1}{6},\quad c_3 = \frac{1}{4}\alpha_{4,3} = -\frac{1}{12}.
\end{equation*}
They can also easily be read off \cref{table-alpha4}, where we have recorded the functions $q_k$ for $k=0,\ldots,4$. The constants $C_i$ are thus the solution of the Vandermonde system
\begin{align*}
c_i =& C_1\lambda_1^i+C_2\lambda_2^i + C_3\lambda_3^i,\quad i=1,2,3,
\end{align*}
that is
\begin{align*}
C_1 =& \frac{c_3+c_1 \lambda_2 \lambda_3-c_2 \left(\lambda_2+\lambda_3\right)}{\lambda_1 \left(\lambda_1-\lambda_2\right) \left(\lambda_1-\lambda_3\right)},\quad C_2 = -\frac{c_3+c_1 \lambda_1 \lambda_3-c_2 \left(\lambda_1+\lambda_3\right)}{\left(\lambda_1-\lambda_2\right) \lambda_2 \left(\lambda_2-\lambda_3\right)},\quad C_3 = \frac{c_3+c_1 \lambda_1 \lambda_2-c_2 \left(\lambda_1+\lambda_2\right)}{\left(\lambda_2-\lambda_3\right) \lambda_3 \left(\lambda_1 - \lambda_3\right)}.
\end{align*}
It is then easy to compute $q$ as
\begin{align*}
q(x) =& \sum_{i=1}^\infty{\beta_ix^i}  = \sum_{j=1}^3 C_j\sum_{i=1}^\infty(i+1)\left(x\lambda_j\right)^i = \sum_{j=1}^3C_j\frac{x\lambda_j(2-x\lambda_j)}{(1-x\lambda_j)^2}=\frac{3}{2}\frac{x(3-x(3-x))}{(3-x(3-x(3-x)))^2}.
\end{align*}
\end{proof}
\begin{remark}
The form of the function $q$ with all Horner coefficients being equal to $3$ is very interesting. We do not have an intuitive explanation for this peculiarity.
\end{remark}

\begin{table}
\begin{tabular}{c|c|cccccccccccccc}
			& 	& \multicolumn{14}{c}{$i$}																				\\ \hline
			& 	& $0$	& $1$		& $2$		& $3$		& $4$	& $5$	& $6$		& $7$	& $8$		& $9$		& $10$		& $11$		& $12$ 	 	\\ \hline
\multirow{5}{*}{$k$} 	& $0$ 	& $\frac14$	& $0$																									\\
			& $1$ 	& $0$	& $1$		& $-1$ 		& $\frac{1}{3}$		& $0$																\\
			& $2$ 	& 	& $\boxed{\mathbf{\frac{1}{2}}}$& $\frac{5}{4}$		& $-\frac{10}{3}$ 	&$\frac{35}{12}$& $-\frac{7}{6}$&$\frac{7}{36}$		& $0$												\\
			& $3$ 	& 	& $\frac{1}{2} $	& $\boxed{\mathbf{\frac{1}{2}}}$& $\frac{2}{3}$		&$-\frac{35}{6}$& $\frac{28}{3}$&$-\frac{133}{18}$	&$\frac{10}{3}$	&$-\frac{5}{6}$		& $\frac{5}{24}$	& $0$							\\
			& $4$ 	& 	& $\frac{1}{2}$		& $\frac{1}{2}$		&$\boxed{\mathbf{-\frac{1}{3}}}$& $\frac{5}{12}$&$-\frac{23}{3}$&$\frac{175}{9}$	&$-24$	&$\frac{215}{12}$	&$-\frac{188}{18}$	&$\frac{143}{54}$	&$-\frac{13}{27}$	& $\frac{13}{324}$& $0$	
\end{tabular}
\vspace{20pt}
\caption{Values of the coefficients $\alpha_{k,i}$ determining the finite\hyp{}time distributions of the four\hyp{}species Bak--Sneppen model. Values of $\beta_i$, $i=1,2,3$, emphasised. Empty cells indicate zeros.}\label{table-alpha4}
\end{table}

We have thus proved the following result.
\begin{theorem}
\label{theorem-asympdist4}
Let $q$ be as in \cref{prop-limitq4}. The stationary distribution $\mu=\mu_4$ of the Bak--Sneppen model with four species is given by
\begin{equation}
\mu\left(\dd^4\bx\right) = \left[\sum_{\nu=1}^4 q(x_\nu)\right]\1_{[0,1]^4}(\bx)\dd^4\bx.
\end{equation}
In particular, the one\hyp{}dimensional marginal of $\mu$, which is the asymptotic fitness distribution at a fixed site, is absolutely continuous with density
\begin{equation}
\label{eq-asympmargdens}
\frac{\dd}{\dd x}\Pb(f_{1,\infty}\leq x) = \left(\frac{3}{4}+q(x)\right)\1_{[0,1]}(x).
\end{equation}
\end{theorem}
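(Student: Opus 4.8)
The plan is to assemble the asymptotic statement from the two convergence facts already in hand: by \cref{prop-ergodicity} the finite-time laws $g_k\,\dd^4\bx$, started from the uniform distribution, converge to the stationary distribution $\mu$ in total variation, while by \cref{prop-limitq4} the one-dimensional ingredients $q_k$ converge uniformly on $[0,1]$ to $q$. The whole argument is then a matter of reconciling these two modes of convergence and identifying the limit.

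First I would lift the uniform convergence $q_k\to q$ to the joint densities. Since $g_k(\bx)=\sum_{\nu=1}^4 q_k(x_\nu)$ is a sum of four terms each converging uniformly on $[0,1]$, the $g_k$ converge uniformly on the cube $[0,1]^4$ to $G(\bx)\coloneqq\sum_{\nu=1}^4 q(x_\nu)$. Because $[0,1]^4$ has finite Lebesgue measure, uniform convergence forces $L^1$-convergence, hence the measures $g_k\,\dd^4\bx$ converge to $G\,\dd^4\bx$ in total variation. Comparing this with the total-variation convergence $g_k\,\dd^4\bx\to\mu$ from \cref{prop-ergodicity} and invoking uniqueness of limits, I would conclude $\mu(\dd^4\bx)=G(\bx)\1_{[0,1]^4}(\bx)\,\dd^4\bx$; in particular $G\geq0$ and $\int_{[0,1]^4}G=1$ follow automatically since $\mu$ is a probability measure. (An alternative to invoking uniqueness would be to plug $G\,\dd^4\bx$ directly into the fixed-point equation for the kernel \cref{eq-transkern}, but the route through \cref{prop-ergodicity} is shorter.)

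For the marginal I would integrate the joint density over $x_2,x_3,x_4$; by the exchangeability of the four coordinates this gives $q(x)+3\int_0^1 q(y)\,\dd y$ for $x\in[0,1]$, so the only remaining task is to evaluate $\int_0^1 q$. Here I would use that each $g_k$ is a probability density, whence $4\int_0^1 q_k=1$; passing to the uniform limit gives $\int_0^1 q=1/4$, and therefore the marginal density equals $q(x)+3/4$ on $[0,1]$, which is \cref{eq-asympmargdens}.

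I do not expect a genuine obstacle: the analytic substance sits in \cref{prop-limitq4} and, upstream, in the recursion of \cref{prop-rec-alpha4}, and what is left here is bookkeeping. The one step deserving a little care is the passage from ``densities converge uniformly on the bounded cube'' to ``laws converge in total variation'', since it is precisely this that licenses identifying the limit $G\,\dd^4\bx$ with the unique invariant measure provided by \cref{prop-ergodicity}.
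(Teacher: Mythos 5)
Your proposal is correct and follows essentially the same route as the paper, which presents the theorem as an immediate consequence of \cref{prop-ergodicity} (total-variation convergence of the finite-time laws to $\mu$) and \cref{prop-limitq4} (uniform convergence of $q_k$ to $q$), with the marginal obtained by integrating out three coordinates and using $\int_0^1 q = 1/4$. Your explicit handling of the passage from uniform convergence of the densities on the bounded cube to total-variation convergence of the laws simply spells out the bookkeeping the paper leaves implicit in its phrase ``We have thus proved the following result.''
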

In \cref{fig-margdens} we have plotted the densities of the one\hyp{}dimensional marginal distributions of $\fitMC_k$ for $k=0,\ldots,6$, together with their limit as given by \cref{eq-asympmargdens}. The convergence asserted by \cref{prop-limitq4} is clearly visible.
\begin{figure}
\center
\includegraphics[width=.5\textwidth]{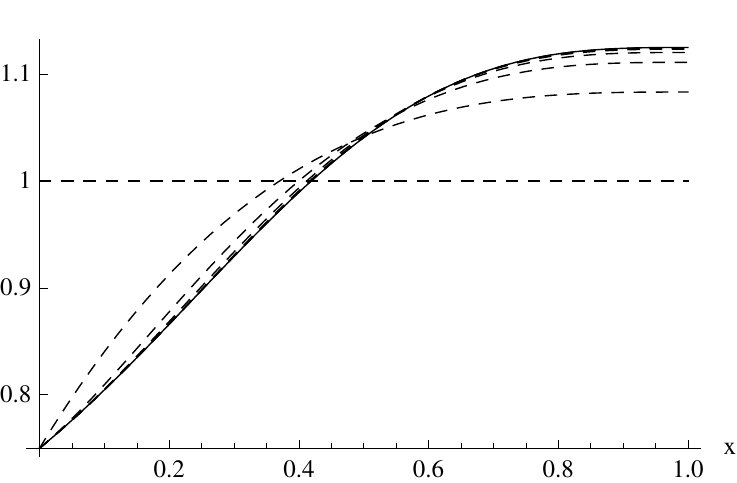}
\caption{Plot of the densities of the one\hyp{}dimensional marginal distributions of $\fitMC_{4,k}$ for $k=0,\ldots,6$ (dashed), together with their limit (solid) as given by \cref{eq-asympmargdens}}
\label{fig-margdens}
\end{figure}

\section{Further steps towards a solution for general population sizes}
\label{section-genrec}
For general values of $n$, a similarly complete analysis of the asymptotic fitness distribution in the BS model as we presented for $n=4$ in the previous section seems difficult to obtain. Here we collect some formulae that serve as the starting point if one wants to carry through the same approach as in \cref{section-n4}. Similarly to \cref{eq-comparecoeff4}, it holds for every $\nu=1,\ldots,n$ that
\begin{align}
\label{eq-comparecoeffn}
  & \sum_{\sigma,\bi}\alpha_{n,k+1,\sigma,\bi}\bx_{]\nu[}^{\bi}\1_{\{0\leq \bx_{]\nu[,\sigma(1)}\leq\ldots\leq \bx_{]\nu[,\sigma(n-3)}\}}\notag\\
= & \sum_{\mu,\bi,\sigma}\alpha_{n,k,\sigma,\bi}\int_{[0,1]^3} \1_{\left\{\xi_2<\min(\xi_1,\xi_3,\bx_{]\nu[})\right\}}\left[\left(\bx_{]\nu[_{\bxi}}\right)_{]\mu[}\right]^{\bi}\1_{\left\{0\leq\left(\bx_{]\nu[_{\bxi}}\right)_{]\mu[,\sigma(1)}\leq\ldots\leq\left(\bx_{]\nu[_{\bxi}}\right)_{]\mu[,\sigma(n-3)}\leq 1\right\}}\dd^3\bxi.
\end{align}
The integrals that appear on the right hand side of this equation fall in four categories, depending on how many of the three $\xi$'s are hit by the outer projection $(\cdot)_{]\mu[}$.
\paragraph{Case 1: $\mu=2$.}
In this case, all three $\xi$'s are hit by the outer projection and the integral on the right side of \cref{eq-comparecoeffn} simplifies to
\begin{align*}
&\int_{[0,1]^3} \1_{\left\{\xi_2<\min(\xi_1,\xi_3,\bx_{]\nu[})\right\}}\left(\bx_{]\nu[}\right)^{\bi}\1_{\left\{0\leq\bx_{]\nu[,\sigma(1)}\leq\ldots\leq\bx_{]\nu[,\sigma(n-3)}\leq 1\right\}}\dd^3\bxi\\
=&\left(\bx_{]\nu[}\right)^{\bi}\bx_{]\nu[,\sigma(1)}\left[\frac13\bx_{]\nu[,\sigma(1)}^2 - \bx_{]\nu[,\sigma(1)}^2 + 1\right]\1_{\left\{0\leq\bx_{]\nu[,\sigma(1)}\leq\ldots\leq\bx_{]\nu[,\sigma(n-3)}\leq 1\right\}}.  
\end{align*}
\paragraph{Case 2: $\mu\in\{1,3\}$.}
In this case, either $\xi_1$ (if $\mu=3$) or $\xi_3$ (if $\mu=1$) is not hit by the outer $\mu$\hyp{}projection. For definiteness, we assume that $\mu=3$, the other sub\hyp{}case being analogous. The sequential application of the outer projections $(\cdot)_{]\nu[_{\bxi}}$ and $(\cdot)_{]3[}$ to $\bx$ results in the vector $\by=(x_{\nu+3},\ldots,x_n,x_1,\ldots,x_{\nu-2},\xi_1)$, contributing a factor $\xi_1^{i_{n-3}}$ to the integral. With the definitions $\by_{\sigma(0)}\coloneqq 0$, $\by_{\sigma(n-2)}\coloneqq 1$ the integral in \cref{eq-comparecoeffn} thus becomes
\begin{align}
\label{eq-case2}
&\by_{1:n-4}^{\bi_{1:n-4}}\left[\int_{[0,1]^3} \1_{\left\{\xi_2<\min(\xi_1,\xi_3,\bx_{]\nu[})\right\}}\1_{\left\{\by_{\sigma\left(\sigma^{-1}(n-3)-1\right)}\leq\xi_1\leq\by_{\sigma\left(\sigma^{-1}(n-3)+1\right)}\right\}}\xi_1^{i_{n-3}}\dd^3\bxi\right]\1^{(*)}_{\left\{0\leq\by_{\sigma(1)}\leq\ldots\leq\by_{\sigma(n-3)}\leq 1\right\}},
\end{align}
where $\by_{1:n-4}=(x_{\nu+3},\ldots,x_n,x_1,\ldots,x_{\nu-2})$, $\bi_{1:n-4}=(i_1,\ldots,i_{n-4})$, and the superscript $(*)$ indicates that only the ordering of the components of $\by$ that pertain to $\bx$, in contrast to $\bxi$, is restricted. Integration of this expression is standard and left to the reader.
\paragraph{Case 3: $\mu\in\{n,4\}$.}
In this case, both $\xi_2$ and either $\xi_3$ (if $\mu=4$) or $\xi_1$ (if $\mu=n$) are hit by the outer $\mu$\hyp{}projection. Again, for definiteness, we assume that $\mu=4$, and observe that the other sub\hyp{}case can be treated in an analogous way. In this case
\begin{equation*}
\left(\bx_{]\nu[_{\bxi}}\right)_{]4[} = \by\coloneqq\left(x_{\nu+4},\ldots,x_n,x_1,\ldots,x_{\nu-2},\xi_1,\xi_2\right),
\end{equation*}
resulting in a factor $\xi_1^{i_{n-4}}\xi_2^{i_{n-3}}$ in the integrand; the integral in \cref{eq-comparecoeffn} is clearly zero unless $\sigma(1)=n-3$, so as not contradict the first indicator function that specifies $\xi_2$ to be less than $\xi_1$ and $\bx_{]\nu[}$. With the obvious definitions of $\by_{1;n-5}$ and $\bi_{1:n-5}$, the integral then becomes
\begin{align}
\label{eq-case3}
&\by_{1:n-5}^{\bi_{1:n-5}}\left[\int_{[0,1]^3} \1_{\left\{\xi_2<\min(\xi_1,\xi_3,\bx_{]\nu[})\right\}}\1_{\left\{\by_{\sigma\left(\sigma^{-1}(n-4)-1\right)}\leq\xi_1\leq\by_{\sigma\left(\sigma^{-1}(n-4)+1\right)}\right\}}\xi_1^{i_{n-4}}\xi_2^{i_{n-3}}\dd^3\bxi\right]\1^{(*)}_{\left\{0\leq\by_{\sigma(2)}\leq\ldots\leq\by_{\sigma(n-3)}\leq 1\right\}},
%   =& \by_{**}^{\bi_{**}},
\end{align}
the evaluation of which we leave to the reader.
\paragraph{Case 4: all other $\mu$.}
In this last case,  all three $\xi$'s remain unaffected by the $\mu$\hyp{}projection and contribute to making the integral a little more complicated than before. Similar to the previous cases one obtains
\begin{equation*}
\by\coloneqq\left(\bx_{]\nu[_{\bxi}}\right)_{]\mu[} = \left(x_{\mu+2},\ldots,x_n,x_1,\ldots,x_{\nu-2},\xi_1,\xi_2,\xi_3,x_{\nu+2},\ldots,x_{\mu-2}\right),
\end{equation*}
and, again, the integral vanishes if $\by_{\sigma(1)}\neq\xi_2$. Consequently, the integral in \cref{eq-comparecoeffn} becomes
\begin{align}
\label{eq-case4}
&\by_*^{\bi_*}\left[\int_{[0,1]^3} \1_{\left\{\xi_2<\min(\xi_1,\xi_3,\bx_{]\nu[})\right\}}\1_{\left\{\by_{\sigma\left(\sigma^{-1}(j_1)-1\right)}\leq\xi_1\leq\by_{\sigma\left(\sigma^{-1}(j_1)+1\right)}\right\}}\1_{\left\{\by_{\sigma\left(\sigma^{-1}(\nu+1)-1\right)}\leq\xi_3\leq\by_{\sigma\left(\sigma^{-1}(\nu+1)+1\right)}\right\}}\right.\notag\\
&\qquad\qquad\times\left.\xi_1^{i_{j_1}}\xi_2^{i_{j_2}}\xi_3^{i_{j_3}}\dd^3\bxi\right]\1^{(*)}_{\left\{0\leq\by_{\sigma(2)}\leq\ldots\leq\by_{\sigma(n-3)}\leq 1\right\}},
\end{align}
where $\by_*=\left(x_{\mu+2},\ldots,x_n,x_1,\ldots,x_{\nu-2},x_{\nu+2},\ldots,x_{\mu-2}\right)$, $\by_{j_k}=\xi_k$, $k=1,2,3$, and $\bi_*=(i_j)_{j\neq j_k}$. According to whether or not $\xi_1$ and $\xi_3$ are separated by some $x_j$ in $(\by_{\sigma(j)})_j$, there are a number of sub\hyp{}cases to consider, which, however, do not present any difficulty. Closed\hyp{}form expressions of the integrals \ref{eq-case2}, \ref{eq-case3}, and \ref{eq-case4} have been obtained, but are omitted for lack of space and because they do not add much insight at the current stage.

By equating equal powers of $\bx$, \cref{eq-comparecoeffn}, can thus, in principle, be used to derive an explicit recursion for the coefficients $\alpha_{n,k,\sigma,\bi}$. Here, we only illustrate this potential by giving the recursion for $n=5$, leaving the problem of working out the combinatorics of the general case for future research. An easy symmetry argument shows that $\alpha_{5,k,(1,2),(i_1,i_2)}$ is equal to $\alpha_{5,k,(2,1),(i_2,i_1)}$ so that it is enough to consider the identity permutation $\mathrm{id}\in\symgr_2$.
\begin{proposition}
\label{prop-rec-alpha5}
For $n=5$, the coefficients $\alpha_{k,i,j}\coloneqq\alpha_{5,k,\mathrm{id},\bi}$ have the following properties:
\begin{enumerate}[i)]
 \item $\alpha_{k,0,j}=0$ for all $k\geq 1$ and $j\geq 0$.
 \item For $k\geq 1$ and $j\geq 1$, they satisfy the recursion
\begin{equation}
\label{eq-recn5jgeq1}
 \alpha_{k+1,i,j} = \begin{cases}
\sum_{p=0}^{3k+1}\frac{\alpha_{k,j,p}}{p+1}+\sum_{p=0}^{j-1}\frac{2p-j+1}{(p+1)(j-p)}\alpha_{k,j-1-p,p},		      & i=1,\\
\alpha_{k,1,j}-\frac{1}{2}\alpha_{k+1,i,j},		      & i=2,\\
\alpha_{k,i-1,j}-\left[1+\frac{1}{i(i-1)}\right]\alpha_{k,i-2,j}+\left[\frac{1}{3}+\frac{1}{i(i-2)}\right]\alpha_{k,i-3,j},		      & i\geq3.
                     \end{cases}
\end{equation}
\item For $k\geq 1$ and $j=0$, they satisfy $\alpha_{k,1,0}=0$ as well as the recursion
\begin{equation}
\label{eq-recn5j0}
\alpha_{k+1,i,0} = \begin{cases}
2\sum_{p=0}^{3k+1}\frac{\alpha_{k,1,p}}{p+1},  & i=2,\\
\alpha_{k,i-1,0}-\left[1+\frac{1}{(i-1)i}\right]\alpha_{k,i-2,0}+\left[\frac{1}{3}+\frac{1}{(i-2)i}\right]\alpha_{k,i-3,0}\\
\quad+\left(1+\frac{2}{i}\right)\sum_{p=0}^{3k+1}\frac{\alpha_{k,i-1,p}}{p+1}-\left(\frac{1}{2}+\frac{2}{i}\right)\sum_{p=0}^{3k+1}\frac{\alpha_{k,i-2,p}}{p+1}\\
\quad-\left(1+\frac{2}{i}\right)\sum_{p=0}^{i-2}\frac{\alpha_{k,i-2-p,p}}{p+1}+\left(\frac{1}{2}+\frac{2}{i}\right)\sum_{p=0}^{i-3}\frac{\alpha_{k,i-3-p,p}}{p+1}\\
\quad+\sum_{p=0}^{i-2}\frac{\alpha_{k,i-2-p,p}}{i-p} -\frac{1}{2} \sum_{p=0}^{i-3}\frac{\alpha_{k,i-3-p,p}}{i-p},		      & i\geq3.
                     \end{cases}
\end{equation}
\end{enumerate}
\end{proposition}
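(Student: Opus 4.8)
The plan is to specialise the master identity \cref{eq-comparecoeffn} to $n=5$ and extract the recursion by comparing coefficients, exactly as was done for $n=4$ around \cref{eq-comparecoeff4}. Writing $\bx_{]\nu[}=(y_1,y_2)$ we have $\symgr_{n-3}=\symgr_2$, and the symmetry $\alpha_{5,k,(2,1),(i,j)}=\alpha_{5,k,\mathrm{id},(j,i)}$ noted before the proposition lets us restrict attention to $\sigma=\mathrm{id}$ and compare coefficients of $y_1^iy_2^j$ on the region $\{0\le y_1\le y_2\le1\}$, where the left-hand side of \cref{eq-comparecoeffn} collapses to $\sum_{i,j}\alpha_{k+1,i,j}\,y_1^iy_2^j$ and $\min(\xi_1,\xi_3,\bx_{]\nu[})=\min(\xi_1,\xi_3,y_1)$. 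On the right-hand side one runs through the five values of $\mu$; the classification into Cases~1--4 above degenerates, since $\mu=2$ lands in Case~1, $\mu\in\{1,3\}$ in Case~2, $\mu\in\{4,5\}$ in Case~3, and Case~4 is empty. The decisive structural point is that, after applying $(\cdot)_{]\nu[_{\bxi}}$ and then $(\cdot)_{]\mu[}$, the outer variable $x_{\nu-2}=y_2$ survives only for $\mu\in\{2,3\}$, whereas for $\mu\in\{1,4,5\}$ the resulting vector $\by$ involves $\xi$'s and $y_1$ only; since the min-indicator also involves only $y_1$, these three contributions are polynomials in $y_1$ alone and hence feed exclusively into the $j=0$ part. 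This is why the recursion for $\alpha_{k+1,i,j}$ with $j\ge1$ sees only $\mu=2$ and $\mu=3$, while \cref{eq-recn5j0} is considerably longer than \cref{eq-recn5jgeq1}.

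Next I would carry out the inner $\bxi$-integrations. In every case these are integrals $\int_{[0,1]^3}\1_{\{\xi_2<\min(\xi_1,\xi_3,y_1)\}}\,\xi_1^a\xi_2^b\xi_3^c\,\1_{\{\mathrm{order}\}}\,\dd^3\bxi$ in which the ordering indicator brackets each surviving $\xi$ between consecutive entries of $\by$, here always among $0,y_1,y_2,1$. They are elementary: the moments $\int\1_{\{\xi_2<\min(\xi_1,\xi_3,y)\}}\dd^3\bxi=y(\tfrac13y^2-y+1)$, $\int\1_{\{\cdots\}}\xi_2^b\dd^3\bxi$ and $\int\1_{\{\cdots\}}\xi_1^a\dd^3\bxi$ from \cref{eq-comparecoeff4}, together with their versions in which the $\xi_1$-integral is truncated to $[0,y_1]$ or $[y_1,y_2]$ (coming from the time-$k$ permutation $\sigma$), cover everything. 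For $\mu=2$ one multiplies $y_1^{i_1}y_2^{i_2}$ by $y_1(\tfrac13y_1^2-y_1+1)$, which yields the bare three-term part $\alpha_{k,i-1,j}-\alpha_{k,i-2,j}+\tfrac13\alpha_{k,i-3,j}$. For $\mu=3$, where $\by=(y_2,\xi_1)$, one splits according to whether $\xi_1\le y_2$; collecting powers gives the corrections $-\tfrac1{i(i-1)}\alpha_{k,i-2,j}$ and $\tfrac1{i(i-2)}\alpha_{k,i-3,j}$ for $i\ge3$, and, after combining the two $\sigma$-pieces and relabelling via $p\mapsto j-1-p$, the $(p+1)^{-1}$-weighted sum $\sum_p\alpha_{k,j,p}/(p+1)$ and the convolution $\sum_{p=0}^{j-1}\frac{2p-j+1}{(p+1)(j-p)}\alpha_{k,j-1-p,p}$ that make up the $i=1$ line of \cref{eq-recn5jgeq1}; the $i=2$ line then follows by re-expressing the same sums through $\alpha_{k+1,1,j}$. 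The truncation $p\le3k+1$ is the degree bound supplied by (the proof of) \cref{prop-finitetimedistqual}. For $j=0$ the same programme is run, now also retaining the $\mu\in\{1,4,5\}$ contributions (for which $\by$ is $(\xi_3,y_1)$, $(\xi_1,\xi_2)$, $(\xi_2,\xi_3)$) and the $y_2^0$-parts of the $\mu\in\{2,3\}$ contributions; collecting like powers of $y_1$ produces \cref{eq-recn5j0}.

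Finally, the vanishing statements. One has $\alpha_{k+1,0,j}=0$ for every $j$ because each inner integral above carries the factor $\1_{\{\xi_2<\min(\xi_1,\xi_3,y_1)\}}$ and hence vanishes at $y_1=0$, so the entire right-hand side of \cref{eq-comparecoeffn} is divisible by $y_1$; and $\alpha_{k,1,0}=0$ (equivalently $\alpha_{k+1,1,0}=0$ for $k\ge1$) is obtained by isolating the coefficient of $y_1y_2^0$ on the right-hand side, where only the leading $y_1$-terms of the $\mu\in\{1,3,4,5\}$ integrals enter, and checking — using $\alpha_{k,0,\cdot}=0$ — that they cancel. The main obstacle is entirely combinatorial: for each $\mu$ one must resolve correctly the subcases hidden in the ordering indicator (which of $0,y_1,y_2,1$ bracket the surviving $\xi$'s, and whether $\xi_1$ and $\xi_3$ end up separated by a fixed variable) and then verify that collecting equal powers of $y_1,y_2$ reproduces the exact rational coefficients $1+\tfrac1{i(i-1)}$, $\tfrac13+\tfrac1{i(i-2)}$, $\tfrac{2p-j+1}{(p+1)(j-p)}$, $1+\tfrac2i$ and $\tfrac12+\tfrac2i$ appearing in the statement. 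A convenient end-check is to regenerate the first few rows of the $n=5$ coefficient table from the recursion and match them against a direct evaluation of \cref{eq-recgnk}, and to confirm $\int g_{5,k}=1$ as in the proof of \cref{lemma-beta}.
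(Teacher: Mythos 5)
Your proposal follows exactly the derivation the paper intends (and largely omits): specialise \cref{eq-comparecoeffn} to $n=5$, note that the paper's Case~4 is empty and that $\mu\in\{1,4,5\}$ produce polynomials in $y_1$ alone while only $\mu\in\{2,3\}$ feed the $j\geq1$ coefficients, evaluate the elementary $\bxi$-integrals, and compare powers of $y_1^iy_2^j$ on the region $y_1\leq y_2$ using the symmetry $\alpha_{5,k,(2,1),(i,j)}=\alpha_{5,k,\mathrm{id},(j,i)}$. Your structural bookkeeping is correct (in particular the corrections $-\tfrac{1}{i(i-1)}\alpha_{k,i-2,j}$, $\tfrac{1}{i(i-2)}\alpha_{k,i-3,j}$ and the $i=1$ convolution check out, with the $i=2$ line read, correctly, as involving $\alpha_{k+1,1,j}$), so this is essentially the paper's own argument carried out at a comparable level of detail.
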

Unfortunately, it does not seem possible to find an explicit solution to this recursion. In fact, the algorithm HYPER for linear recurrence relations with polynomial coefficients \citep{Petkovsek1992hypergeometric} indicates that the last line of \cref{eq-recn5jgeq1} does not have a hypergeometric solution. Our results can, however, be used to obtain exact expression for the finite\hyp{}time fitness distributions in the five\hyp{}species BS model without having to evaluate any integrals. Calculations based on \cref{eq-recn5jgeq1} and \cref{eq-recn5j0} give rise to the following conjecture.
\begin{conjecture}
The coefficients $\alpha_{k,i,j} = \alpha_{5,k,\mathrm{id},\bi}$ have the following properties:
\begin{enumerate}[i)]
 \item They form a stair case pattern in the sense that, for $j\geq 0$,
\begin{align*}
 \alpha_{k,i,0} = 0,\quad & i\geq 3k,\\
 \alpha_{k,i,1} = 0,\quad & i\geq 3(k-1),\\
  \alpha_{k,i,j} =0,\quad & i\geq 3(k-\lambda),\quad \lambda=\left\lfloor\frac{j+1}{3}\right\rfloor;
\end{align*}
in particular, $\alpha_{k,1,j}=0$ for $j\geq 3k-1$.
\item They become constant at time $i+j+1$, that is there exist numbers $\beta_{i,j}$ such that $\alpha_{k,i,j}=\beta_{i,j}$ for all $k\geq i+j+1$.
\end{enumerate}
\end{conjecture}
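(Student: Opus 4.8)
The plan is to replay, for $n=5$, the three-step argument that settled $n=4$ — \cref{prop-rec-alpha4} (support of the coefficients), \cref{lemma-beta} (stabilisation in $k$), \cref{prop-limitq4} (closed form) — bearing in mind that no closed form should be expected here, since the HYPER test already indicates that the governing recursion has no hypergeometric solution; so only the structural statements i) and ii) are at stake. Both would be proved by induction on $k$, with the base case $k=1$ supplied by the explicit formula for $g_{5,1}$ from \cref{section-qualana}, which gives $q_{5,1,\mathrm{id}}(y_1,y_2)=p(y_1)$ and hence $\alpha_{1,i,j}=0$ except for $(i,j)\in\{(1,0),(2,0),(3,0)\}$. \cref{prop-ergodicity} already guarantees that the finite-time densities converge, so ii) is really a statement about the \emph{rate}: it is the combinatorial fact that a single step of the recursions \cref{eq-recn5jgeq1} and \cref{eq-recn5j0} affects only finitely many new coefficients and never revisits an old one.

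For part i), observe that the last line of \cref{eq-recn5jgeq1} (for $i\ge3$) and the last line of \cref{eq-recn5j0} are, up to their inhomogeneous sums over $p$, the same three-term recursion $\alpha_{k+1,i,j}\leftarrow\alpha_{k,i-1,j},\alpha_{k,i-2,j},\alpha_{k,i-3,j}$ that powered \cref{prop-rec-alpha4}: it lowers $k$ by one and $i$ by $1$, $2$, or $3$ at fixed $j$. Iterating it reduces any $\alpha_{k,i,j}$ with $i$ large either to the base case or to the rows $i\in\{1,2\}$; the $i=2$ row is a finite combination of the $i=1$ row of the same slice together with entries of strictly smaller first index, so the only genuinely new feature is the $i=1$ row, where $\alpha_{k+1,1,j}$ is fed by $\sum_p\alpha_{k,j,p}/(p+1)$ and by $\sum_p(\cdots)\alpha_{k,j-1-p,p}$ — that is, by \emph{all} slices $j'\le j$. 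The vanishing pattern of one slice therefore propagates to the others only with a delay, and tracking that delay is exactly what produces the floor $\lfloor(j+1)/3\rfloor$ in the asserted bounds; it forces the induction to be a double induction — on $k$, and within each time level on $j$ (or on $i+j$) — rather than slice by slice. The same induction locates the exact edge of the support.

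For part ii) one may assume part i); its first consequence is that every sum over $p$ in \cref{eq-recn5jgeq1} and \cref{eq-recn5j0} has, at each fixed $k$, only finitely many nonzero summands. Induct on $i+j$. The $i\ge3$ recursion propagates ``eventually constant'' from $\{i-1,i-2,i-3\}\times\{j\}$ exactly as in \cref{lemma-beta}; the $i=2$ row then follows from the $i=1$ row; the case $i=1$ is the heart of the matter. In the four-species model the first column was frozen by a genuine conservation law: normalisation of $g_k$ forced $\alpha_{k,1}=1/2$ for all $k\ge2$. Here the analogous identity, obtained by integrating $g_{5,k}$ and invoking the five-fold site symmetry together with the $\symgr_2$ symmetry, reads only $\int_{\{0\le y_1\le y_2\le 1\}}q_{5,k,\mathrm{id}}(\by)\,\dd^2\by=1/10$ — a single scalar, pinning down just one linear functional of the $\alpha_{k,1,j}$. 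One must therefore show directly that the right-hand side of the $i=1$ line of \cref{eq-recn5jgeq1} has already settled by time $i+j+1=j+2$: the coupled sum $\sum_p(\cdots)\alpha_{k,j-1-p,p}$ has total index $j-1$ and touches only slices $j'<j$, hence is constant for $k\ge j$ by the induction hypothesis, while the first sum $\sum_p\alpha_{k,j,p}/(p+1)$ — the coefficient of $x^j$ in $\int_0^1 q_{5,k,\mathrm{id}}(x,y)\,\dd y$ — must be shown to freeze by time $j+2$ too.

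That last point is where I expect the real difficulty, and it is presumably why the statement is offered only as a conjecture: by part i) each summand $\alpha_{k,j,p}$ is eventually constant, but its stabilisation time $j+p+1$ grows with $p$ while, again by part i), the surviving range of $p$ grows linearly in $k$, so a term-by-term estimate would freeze the sum only at a time of order $k$ — its $p\ge1$ terms are simply not reached by the induction hypothesis at time $j+2$. What is needed is a cancellation internal to $\int_0^1 q_{5,k,\mathrm{id}}(x,y)\,\dd y$, a coefficient-by-coefficient analogue of the four-species identity $\sum_j\alpha_{k,j}/(j+1)=1/4$, forcing its low-order coefficients in $x$ to settle well ahead of its high-order ones; establishing such a cancellation — perhaps via an auxiliary recursion for these partial sums — seems to require essentially the full strength of the conjecture. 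Everything else (the $i\ge3$ recursions, the base case, and the finiteness coming from part i)) should be routine, if combinatorially heavy. Once the obstacle is cleared, the whole argument is conveniently packaged as the single claim ``$\alpha_{k,i,j}=\alpha_{i+j+1,i,j}$ whenever $i+j+1\le k$'', proved by induction on $k$ with inductive step the case analysis of \cref{eq-recn5jgeq1} and \cref{eq-recn5j0} above.
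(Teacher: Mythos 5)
There is nothing in the paper to compare your argument against: the statement you are addressing is offered there explicitly as a \emph{conjecture}, supported only by calculations with the recursions \cref{eq-recn5jgeq1} and \cref{eq-recn5j0}, and the author even remarks that the analogue of the $n=4$ analysis does not go through (no hypergeometric solution). So the relevant question is whether your proposal closes the gap the paper leaves open, and it does not. You say so yourself at the decisive point: for the $i=1$ line of \cref{eq-recn5jgeq1} you need the sum $\sum_p\alpha_{k,j,p}/(p+1)$ to freeze by time $j+2$, and the induction hypothesis (stabilisation of $\alpha_{k,j,p}$ only for $k\geq j+p+1$) reaches only the $p=0$ term by that time, while part i) lets the range of relevant $p$ grow linearly in $k$. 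The ``cancellation internal to $\int_0^1 q_{5,k,\mathrm{id}}(x,y)\,\dd y$'' that you invoke is precisely the missing ingredient, not a technicality; without it part ii) is unproved, and since your proof of part i) routes the $i=1$ row through \emph{all} slices $j'\leq j$, the bookkeeping that is supposed to ``produce the floor $\lfloor(j+1)/3\rfloor$'' is likewise asserted rather than carried out. In short: a reasonable roadmap, honestly flagged as incomplete, but not a proof --- which is consistent with the status the paper itself assigns to the statement.

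One concrete warning about your intended base case. From $g_{5,1}(\bx)=\sum_\nu p(\min\bx_{]\nu[})$ with $p(x)=x-x^2+\tfrac13x^3$ you correctly get $q_{5,1,\mathrm{id}}(y_1,y_2)=p(y_1)$ on $\{y_1\leq y_2\}$, so $\alpha_{1,3,0}=\tfrac13\neq 0$; but part i) as stated demands $\alpha_{k,i,0}=0$ for $i\geq 3k$, i.e.\ $\alpha_{1,3,0}=0$. So either the staircase bounds are meant to hold only from $k\geq 2$ on (note the contrast with \cref{prop-rec-alpha4}, where the cutoff is $i\geq 3k+1$), or the stated edge of the support is off by one; in either case an induction on $k$ starting at $k=1$ with your base case cannot establish the bounds exactly as written, and the precise edge would have to be pinned down (numerically, from \cref{eq-recn5jgeq1} and \cref{eq-recn5j0}) before any induction is set up.
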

In fact, we conjecture that, for each $n$, there exist $k_{n,\sigma,\bi}\in\N$ and $\beta_{n,\sigma,\bi}\in\Q$ such that $\alpha_{n,k,\sigma,\bi}=\beta_{n,\sigma,\bi}$ for all $k\geq k_{n,\sigma,\bi}$.
\section*{Acknowledgements}
I thank two anonymous referees for carefully reading the manuscript and making helpful suggestions to improve its presentation. I also thank them for pointing out the applicability of our method to the anisotropic Bak--Sneppen model.


\begin{thebibliography}{15}
\providecommand{\natexlab}[1]{#1}
\providecommand{\url}[1]{\texttt{#1}}
\expandafter\ifx\csname urlstyle\endcsname\relax
  \providecommand{\doi}[1]{doi: #1}\else
  \providecommand{\doi}{doi: \begingroup \urlstyle{rm}\Url}\fi

\bibitem[Aldous et~al.(1997)Aldous, Lov{\'a}sz, and Winkler]{Aldous1997mixing}
D.~Aldous, L.~Lov{\'a}sz, and P.~Winkler.
\newblock Mixing times for uniformly ergodic {M}arkov chains.
\newblock \emph{Stochastic Process. Appl.}, 71\penalty0 (2):\penalty0 165--185,
  1997.

\bibitem[Bak and Sneppen(1993)]{Bak1993punctuated}
P.~Bak and K.~Sneppen.
\newblock Punctuated equilibrium and criticality in a simple model of
  evolution.
\newblock \emph{Phys. Rev. Let.}, 71\penalty0 (24):\penalty0 4083--4086, 1993.

\bibitem[Flyvbjerg et~al.(1993)Flyvbjerg, Sneppen, and Bak]{flyvbjerg1993mean}
H.~Flyvbjerg, K.~Sneppen, and P.~Bak.
\newblock Mean field theory for a simple model of evolution.
\newblock \emph{Physical review letters}, 71\penalty0 (24):\penalty0
  4087--4090, 1993.

\bibitem[Gillet(2007)]{gillet2007thesis}
A.~J. Gillet.
\newblock PhD thesis, {Vrije Universiteit Amsterdam}, 2007.

\bibitem[Gillett et~al.(2006{\natexlab{a}})Gillett, Meester, and
  Nuyens]{gillet2006bounds}
A.~Gillett, R.~Meester, and M.~Nuyens.
\newblock Bounds for avalanche critical values of the {B}ak--{S}neppen model.
\newblock \emph{Markov Process. Related Fields}, 12\penalty0 (4):\penalty0
  679--694, 2006{\natexlab{a}}.

\bibitem[Gillett et~al.(2006{\natexlab{b}})Gillett, Meester, and Van
  Der~Wal]{gillet2006maximal}
A.~Gillett, R.~Meester, and P.~Van Der~Wal.
\newblock Maximal avalanches in the {B}ak--{S}neppen model.
\newblock \emph{J. Appl. Probab.}, 43\penalty0 (3):\penalty0 840--851,
  2006{\natexlab{b}}.

\bibitem[Grinfeld et~al.(2011{\natexlab{a}})Grinfeld, Knight, and
  Wade]{grinfeld2011bak}
M.~Grinfeld, P.~A. Knight, and A.~R. Wade.
\newblock {Bak--Sneppen\hyp{}type models and rank\hyp{}driven processes}.
\newblock \emph{Phys. Rev. E}, 84\penalty0 (4):\penalty0 041124,
  2011{\natexlab{a}}.

\bibitem[Grinfeld et~al.(2011{\natexlab{b}})Grinfeld, Knight, and
  Wade]{grinfeld2011rank}
M.~Grinfeld, P.~A. Knight, and A.~R. Wade.
\newblock Rank\hyp{}driven {M}arkov processes.
\newblock \emph{J. Stat. Physics}, 146\penalty0 (2):\penalty0 378--407,
  2011{\natexlab{b}}.

\bibitem[Head and Rodgers(1998)]{head1998anisotropic}
D.~A. Head and G.~J. Rodgers.
\newblock The anisotropic {B}ak--{S}neppen model.
\newblock \emph{J. Phys. A}, 31\penalty0 (17):\penalty0 3977--3988, 1998.

\bibitem[Jensen(1998)]{jensen1998self}
H.~J. Jensen.
\newblock \emph{Self-organized criticality: emergent complex behavior in
  physical and biological systems}, volume~10.
\newblock Cambridge University Press, 1998.

\bibitem[Maslov et~al.(1998)Maslov, De~Los~Rios, Marsili, and
  Zhang]{maslov1998critical}
S.~Maslov, P.~De~Los~Rios, M.~Marsili, and Y.C. Zhang.
\newblock Critical exponents of the anisotropic {B}ak--{S}neppen model.
\newblock \emph{Physical Review E}, 58\penalty0 (6):\penalty0 7141, 1998.

\bibitem[Meester and Znamenski(2003)]{MeesterZnamenski2003limit}
R.~Meester and D.~Znamenski.
\newblock Limit behavior of the {B}ak--{S}neppen evolution model.
\newblock \emph{Ann. Probab.}, 31\penalty0 (4):\penalty0 1986--2002, 2003.

\bibitem[Meester and Znamenski(2004)]{MeesterZnamenski2004critical}
R.~Meester and D.~Znamenski.
\newblock Critical thresholds and the limit distribution in the
  {B}ak--{S}neppen model.
\newblock \emph{Comm. Math. Phys.}, 246\penalty0 (1):\penalty0 63--86, 2004.

\bibitem[Meyn and Tweedie(2009)]{meyn2009markov}
S.~Meyn and R.~L. Tweedie.
\newblock \emph{Markov chains and stochastic stability}.
\newblock Cambridge University Press, Cambridge, second edition, 2009.

\bibitem[Petkov{\v{s}}ek(1992)]{Petkovsek1992hypergeometric}
M.~Petkov{\v{s}}ek.
\newblock Hypergeometric solutions of linear recurrences with polynomial
  coefficients.
\newblock \emph{J. Symbolic Comput.}, 14\penalty0 (2-3):\penalty0 243--264,
  1992.

\end{thebibliography}
\end{document}